\documentclass[12pt]{article}

\usepackage{subcaption}

\newif\ifcomments

\usepackage[letterpaper, margin=1in]{geometry}

\newcommand{\paragraphNoSkip}[1]{\paragraph{#1.}}
\newcommand{\captionDescription}[1]{\caption{#1}}

\usepackage[backend=biber, style=numeric, sorting=nyt, maxbibnames=10, maxalphanames=10, minalphanames=10]{biblatex} 

\let\localcite\cite
\renewcommand{\cite}[1]{\localcite{#1}}
\let\localciteauthor\citeauthor
\newcommand{\citet}[1]{\localciteauthor{#1}~\localcite{#1}}

\usepackage[T1]{fontenc}
\usepackage[english]{babel}
\usepackage{csquotes}
\usepackage{tabularx}
\usepackage{array}

\newcolumntype{P}{>{\raggedright\arraybackslash}X}

\usepackage{mathtools}
\usepackage{amsmath}
\usepackage{amssymb}
\newcommand{\cmark}{\checkmark}
\usepackage{utfsym}

\newcommand{\cxmark}{\cmark\kern-1.1ex\raisebox{.7ex}{\rotatebox[origin=c]{125}{--}}}

\usepackage{multirow}
\usepackage{amsthm}
\usepackage{thmtools,thm-restate}
\usepackage[ruled,linesnumbered]{algorithm2e}
\usepackage{enumitem}
\usepackage{graphicx}
\graphicspath{{./figures/}}

\usepackage{tikz}
\usepackage{pgfplots}
\usepackage{pgf-pie}
\pgfplotsset{compat=1.18}
\usetikzlibrary{dateplot}
\newcommand*\shortyear[1]{\expandafter\@gobbletwo\number\numexpr#1\relax} 
\usetikzlibrary{patterns}

\definecolor{c0}{RGB}{165 0 38}
\definecolor{c1}{RGB}{215 48 39}
\definecolor{c2}{RGB}{244 109 67}
\definecolor{c3}{RGB}{253 174 97}
\definecolor{c4}{RGB}{254 224 144}
\definecolor{c5}{RGB}{255 255 191}
\definecolor{c6}{RGB}{224 243 248}
\definecolor{c7}{RGB}{171 217 233}
\definecolor{c8}{RGB}{116 173 209}
\definecolor{c9}{RGB}{69 117 180}
\definecolor{c10}{RGB}{49 54 149}

\usepackage{booktabs}
\usepackage{csvsimple}
\usepackage[most]{tcolorbox}

\usepackage{url}
\PassOptionsToPackage{hyphens}{url}

\usepackage{hyperref}
\PassOptionsToPackage{breaklinks}{hyperref}

\usepackage{xcolor}

\usepackage[capitalise]{cleveref}

\newtheorem*{example*}{Example}

\crefname{definition}{Definition}{Definitions}
\crefname{theorem}{Theorem}{Theorems}
\crefname{corollary}{Corollary}{Corollaries}
\crefname{example}{Example}{Examples}
\crefname{remark}{Remark}{Remarks}

\usepackage[symbols,acronym,nonumberlist,nogroupskip,stylemods={mcols,longbooktabs}]{glossaries-extra}
\makenoidxglossaries
\glssetcategoryattribute{acronym}{nohyper}{true}
\setabbreviationstyle[acronym]{long-short}

\newacronym[longplural={Markov processes}]{MP}{MP}{Markov process}
\newacronym[longplural={Markov decision processes}]{MDP}{MDP}{Markov decision process}
\newacronym{AI}{AI}{artificial intelligence}
\newacronym{AMM}{AMM}{automated market maker}
\newacronym{APY}{APY}{annual percentage yield}
\newacronym{HAC}{HAC}{Heteroskedasticity and Autocorrelation Consistent}
\newacronym{APR}{APR}{annual percentage rate}
\newacronym{ASIC}{ASIC}{application specific integrated circuit}
\newacronym{CDF}{CDF}{cumulative density function}
\newacronym{CPU}{CPU}{central processing unit}
\newacronym{DAA}{DAA}{difficulty-adjustment algorithm}
\newacronym{DQL}{DQL}{deep-Q-learning}
\newacronym{DeFi}{DeFi}{decentralized finance}
\newacronym{EIP}{EIP}{Ethereum improvement proposal}
\newacronym{ERC}{ERC}{Ethereum request for comments}
\newacronym{EVM}{EVM}{Ethereum virtual machine}
\newacronym{LP}{LP}{liquidity provider}
\newacronym{LT}{LT}{liquidity taker}
\newacronym{LTV}{LTV}{loan-to-value}
\newacronym{MEV}{MEV}{Maximal Extractable Value}
\newacronym{ML}{ML}{machine learning}
\newacronym{OO}{OO}{order optimization}
\newacronym{PDF}{PDF}{probability density function}
\newacronym{PID}{PID}{proportional integral derivative}
\newacronym{PoS}{PoS}{proof-of-stake}
\newacronym{PoW}{PoW}{proof-of-work}
\newacronym{RAM}{RAM}{random-access memory}
\newacronym{RL}{RL}{reinforcement learning}
\newacronym{RPC}{RPC}{remote procedure call}
\newacronym{SSD}{SSD}{solid state drive}
\newacronym{URL}{URL}{uniform resource locator}
\newacronym{USD}{USD}{United States Dollar}
\newacronym{WETH}{WETH}{Wrapped Ethereum}
\newacronym{WBTC}{WBTC}{Wrapped Bitcoin}
\newacronym{block-DAG}{block-DAG}{block directed-acyclic-graph}
\newacronym{geth}{geth}{Go Ethereum}
\newacronym{p2p}{p2p}{peer to peer}
\newacronym{FaaS}{FaaS}{front-running-as-a-service}
\newacronym{CEX}{CEX}{centralized exchange}
\newacronym{DEX}{DEX}{decentralized exchange}
\newacronym{TVL}{TVL}{total value locked}
\newacronym{CFMM}{CFMM}{constant function market maker}
\newacronym{TFM}{TFM}{transaction fee mechanism}
\newacronym{IC}{IC}{incentive compatible}
\newacronym{FN}{FN}{false name}
\newacronym{MIC}{MIC}{miner incentive compatible}
\newacronym{wrt}{w.r.t.}{with regards to}
\newacronym{QoS}{QoS}{Quality-of-Service}
\newacronym{DoS}{DoS}{Denial-of-Service}
\newacronym{NFT}{NFT}{non fungible token}
\newacronym{KYC}{KYC}{know your customer}
\newacronym{UX}{UX}{user experience}
\newacronym{L1}{L1}{layer 1 blockchain}
\newacronym{L2}{L2}{layer 2 blockchain}
\newacronym{iid}{i.i.d.}{independent and identically distributed}
\newacronym{wlog}{w.l.o.g.}{without loss of generality}
\newacronym{ENS}{ENS}{Ethereum Name Service}
\newacronym{LST}{LST}{liquid staking token}
\newacronym{SoK}{SoK}{systemization of knowledge}
\newacronym{DRL}{DRL}{deep reinforcement learning}
\newacronym{ECDF}{ECDF}{empirical cumulative distribution function}
\newacronym{PPO}{PPO}{Proximal Policy Optimization}
\newacronym{VDF}{VDF}{verifiable delay function}
\newacronym{DRB}{DRB}{distributed randomness beacon}

\glsxtrnewsymbol[description={The underlying token of the blockchain.}]{token}{\ensuremath{\leavevmode\clap{\hskip7pt--}\mathcal{T}}}
\newcommand{\token}{{\gls[hyper=false]{token}}}

\glsxtrnewsymbol[description={The \gls{LST}.}]{lst}{\ensuremath{\leavevmode\clap{\hskip5pt--}\mathcal{L}}}
\newcommand{\lst}{{\gls[hyper=false]{lst}}}

\glsxtrnewsymbol[description={A block.}]{block}{\ensuremath{b}}

\glsxtrnewsymbol[description={The cost of $1\lst$, denominated in $\token$.}]{cost}{\ensuremath{c}}
\newcommand{\cost}{{\gls[hyper=false]{cost}}}

\begin{document}
\title{Your Loss is My Gain:\\Low Stake Attacks on Liquid Staking Pools}

\author{
    Sen Yang$^{*,1}$
    \and
    Aviv Yaish$^{*,1,2}$
    \and
    Arthur Gervais$^{3}$
    \and
    Fan Zhang$^{1}$
}
\date{
    \small
    $^*$Equal contribution\\
    $^1$Yale University \& IC3\\
    $^2$Complexity Science Hub, Vienna\\
    $^3$University College London \& UC Berkeley RDI
}

\maketitle
\begin{abstract}
Permissionless Proof-of-Stake (PoS) economic security is predicated on the high cost of violating consensus safety or liveness.
We show that liquid staking introduces additional risks that are not captured by standard PoS economic security arguments.
Through an empirical study of Ethereum data, we find that the operational performance of liquid staking pools is positively associated with subsequent normalized liquid staking token (LST) returns.
Motivated by this, we present a cross-layer attack: a low-stake adversary can manipulate the consensus protocol to degrade a target pool's performance and take application-layer positions that profit if the market reprices the corresponding \gls{LST} in-line with the historically observed association.

To make the consensus layer manipulation concrete, we develop a deep reinforcement learning (DRL) framework to automatically discover attack strategies.
Our evaluation shows that the learned strategies can recover near-optimal theoretical attacks and uncover new manipulation behaviors that significantly degrade target pool performance.
We further characterize feasible application-layer monetization channels and analyze leveraged shorting in detail using Monte Carlo simulations, showing that such attacks can be profitable with over one-half probability for LSTs of major staking pools.
Our findings reveal a previously overlooked attack surface in PoS systems with liquid staking and expose a gap between consensus and economic security.

\end{abstract}

\section{Introduction}
Permissionless \Gls{PoS} blockchains allow users to become validators, i.e., take part in processing transactions, by locking tokens as illiquid stake.
To prevent an overproliferation of validators, blockchains can set a minimum stake amount, thus also lowering communication overhead.
However, some worry that this entry barrier mays harm decentralization~\cite{buterin2024possible, buterin2024some};
consider that Ethereum, the most popular \gls{PoS} blockchain, mandates a minimal stake of $32$ ETH, currently worth over $100{,}000$ USD.
Market forces addressed this concern by (1) forming staking \emph{pools} where users (short of the minimal amount or not) collectively stake their funds, and (2) issuing tradeable \emph{\glspl{LST}} to pool members which can be redeemed for their share of the stake.
Pooled and liquid staking have become Ethereum's main staking methods, consistently comprising about $50\%$ of stake since Ethereum's transition to \gls{PoS} (see \cref{figure:StakeDistributionCategory,figure:StakeOverTime}).
Given this prominence, it is natural to ask:
\begin{quote}
\emph{Do staking pools and liquid staking contribute to the security of the system?}
\end{quote}

\begin{figure}
    \centering
    \begin{tikzpicture}[scale=1,align=center]
    \pie[
        text=pin,
        radius=1.8,
        explode=0.1,
        color={c0!40, c4!40, c10!40}
    ]{
        49.0/Liquid staking\\pools,
        27.9/Others,
        23.1/CEXs
    }
    \end{tikzpicture}
    \captionDescription{The distribution of stake in Ethereum by staker category, based on data from~\cite{dune2026eth2staking} as of Feb 2026, is shown. As is evident, liquid staking is the leading staking category.}
    \label{figure:StakeDistributionCategory}
\end{figure}

\subsection{This work}
To answer this question, we present attacks that target liquid staking pools and allow an adversary with a small fraction of the stake to increase its profit.
Our attacks work \emph{even} when the underlying consensus mechanism is provably economically secure.
Such mechanisms typically ensure that the cost of violating safety (honest nodes agree on the identity and order of processed transactions) and liveness (new transactions are eventually processed) is higher than the possible profits for attackers without a large fraction of stake (e.g., over $33\%$).
Our attacks remain viable even in the face of such strong guarantees due to several key ideas:
(1) Instead of violating safety or liveness to generate a profit, attacks can influence staking market dynamics to the benefit of the adversary;
(2) Market dynamics are affected by pool performance: profit-maximizing users go with the best-performing pools, and the value of a pool's \glspl{LST} is positively associated with its performance, using the pool's total \gls{APR}, i.e., the realized annualized return delivered to stakers;
(3) Pool performance can be manipulated by adversaries inside and outside the pool.

\paragraphNoSkip{Understanding market dynamics}
To motivate our attacks, we perform the first analysis of how staking pool performance affects \gls{LST} prices.
Using each pool's recent staking \gls{APR}, we find that better performance is consistently associated with subsequent movements in normalized \gls{LST} prices across six liquid staking pools, with {\em moderate correlations} for most pools and {\em strong correlations} for Coinbase and Binance.
For example, Lido shows a moderate Pearson correlation of 0.391, while Coinbase shows a strong Pearson correlation of 0.645.

\paragraphNoSkip{Cross-layer attacks}
Motivated by this observation, we analyze a class of cross-layer attacks against liquid staking pools which separate the \emph{attack surface} from the \emph{profit channel}: the adversary manipulates consensus-layer performance, but extracts profit at the application layer.
As illustrated in \Cref{fig:attack-overview}, under observed market dynamics, the adversary can monetize the attack through application-layer actions (e.g., taking a short position) that make a profit when \gls{LST} prices decrease.

\begin{figure}[th]
    \centering
    \includegraphics[width=0.65\linewidth]{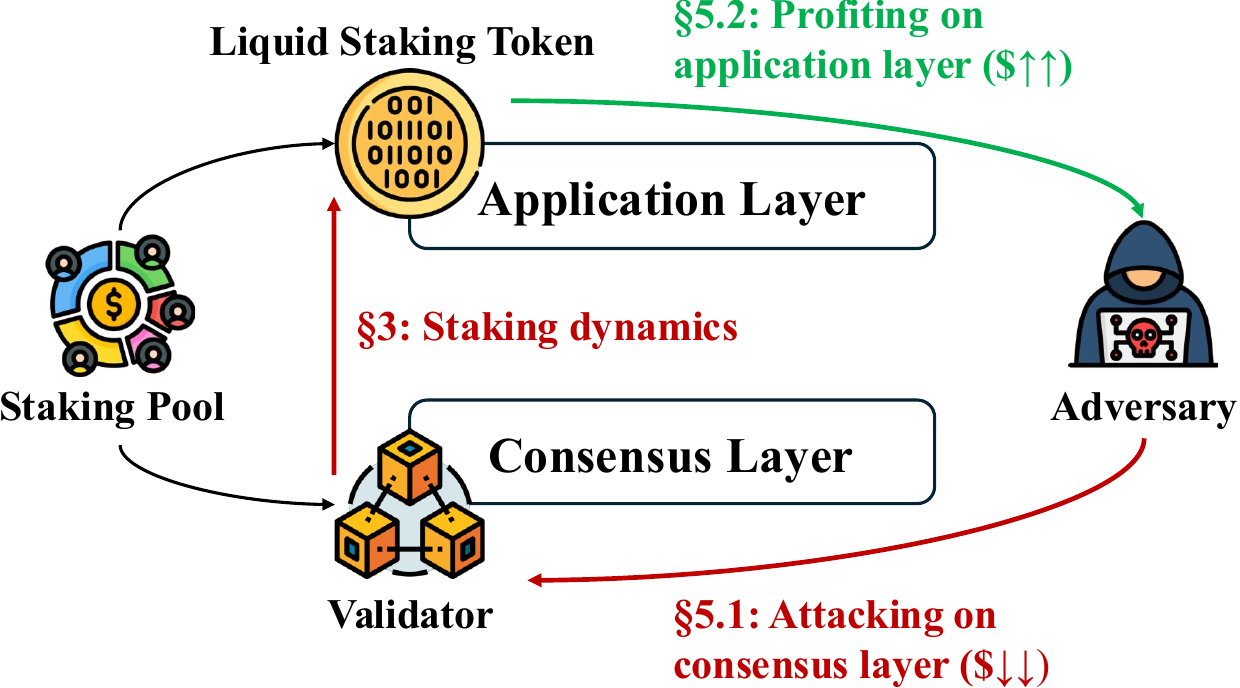}
    \captionDescription{An overview of our attacks. The adversary first targets the consensus layer by attacking validators controlled by a liquid staking pool, which degrades the pool's performance.
    If the corresponding \gls{LST} is repriced according to the target pool's performance (as analyzed in \Cref{section:EmpiricalData}), the price decrease can be monetized via financial positions.}
    \label{fig:attack-overview}
\end{figure}

To make these attacks concrete, we first focus on a simple and realistic class of consensus-layer attacks available to a low-stakes adversary: manipulation of the \gls{PoS} leader-election process~\cite{ferreira2022optimal,alpturer2024optimal}. We develop a \gls{DRL} framework that automatically discovers such strategies, allowing the adversary to bias the mechanism, reduce a victim pool's realized block allocation, and thereby degrade its performance.
Through carefully designed reward functions, our framework can steer the adversary to either maximize its own block allocation or, when desired, trade off its own allocation to minimize the victim's realized block allocation.
We further extend the environment to model heterogeneous block values induced by \gls{MEV}, allowing us to study how \gls{MEV} opportunity costs affect learned manipulation strategies.
We use self-allocation maximization mainly as a validation baseline and focus on victim-targeted degradation.

We then turn to the application layer.
Given the observed \gls{LST} price relationship with pool performance, we provide a taxonomy of monetization strategies. As a representative case, we analyze leveraged shorting, which is simple to execute and supported by deep liquidity, to estimate the adversary's potential profits and resulting attack incentives.

\paragraphNoSkip{Evaluation} 
We evaluate the framework across three goals.

First, as a baseline validation, we configure the reward to maximize the adversary's block allocation and show that learned policies recover known self-optimization strategies~\cite{alpturer2024optimal,nagy2025forking}.

Second, we change the objective to victim-targeted degradation, which differs from prior work that focused on increasing the adversary's allocation.
Under this objective, the learned policies trade-off limited adversarial loss for disproportionate victim loss.
For example, an adversary with 20\% stake can reduce the realized block allocation of another 20\% stake pool by about 8.3\%.
These degradation-oriented policies also impose a more significant impact on chain quality: adversary more often sacrifice or displace blocks to harm the victim pool.

Third, we evaluate an MEV-aware extension with heterogeneous block values.
We find that MEV changes the learned behavior: self-optimization policies become more conservative about sacrificing the adversary's high-value slots, while self-optimization and victim-targeted policies become more willing to fork-displace victim blocks with valuable MEV opportunities.

Finally, using Monte Carlo simulations calibrated to real-world data and parameters, we find that leveraged shorting can yield a likelihood of profit above one-half for three representative pools: Coinbase, Ether.fi, and Rocket Pool. Compared with the roughly $3\%$ APR from honest staking, the expected profit from successful executions could be $11\times$ higher, providing a strong incentive even when attacks are executed infrequently.

\paragraphNoSkip{Contributions}
We summarize our contributions below.
\begin{itemize}[leftmargin=*]
    \item We are the first to empirically study the relation between liquid staking pool performance and \gls{LST} prices; our analysis finds consistent positive associations, with moderate correlations for several pools and a strong correlation for Coinbase and Binance.
    \item Building on the empirical link, we identify a class of cross-layer attacks that separate the attack surface from the profit channel: the adversary degrades a target pool's consensus-layer performance and monetizes \gls{LST} price movements through application-layer strategies.
    \item We implement a \gls{DRL} framework for discovering \gls{PoS} leader-election manipulation strategies under different objectives. It recovers known self-optimization strategies as a validation baseline and identifies degradation-oriented strategies that impose disproportionate loss on a specific victim pool.
    \item We propose a comprehensive taxonomy of application-layer monetization channels and analyze shorting as a representative case; Monte Carlo simulations with real-world protocol parameters show that such attacks can be economically attractive.
\end{itemize}

\section{Background}
\label{sec:background}

\subsection{Terminology}
\paragraphNoSkip{Proof-of-Stake}
Proof-of-Stake (PoS) blockchains rely on stake-weighted participation to maintain consensus safety and liveness~\cite{kiayias2017ouroboros,daian2019snow,buterin2020combining,neu2021ebb,zhang2025available,yang2025geographical}.
To participate in consensus, an entity must lock a protocol-defined amount of capital as {\em stake} to become a {\em validator}, which gives it both the right and obligation to perform consensus duties such as block proposal and attestation; in return, validators earn protocol-defined rewards for doing so correctly.

\paragraphNoSkip{Liquid Staking}
Liquid staking allows users to participate in \gls{PoS} systems without directly operating validators.
Instead, users deposit their stake into a liquid staking pool, which delegates the stake to a set of operators that run validators on behalf of the pool. In return, the pool issues a liquid staking token (LST) that represents a claim on the underlying staked assets and future consensus rewards.
Liquid staking tokens typically follow two accounting designs: some staking pools distribute rewards via \textit{rebasing} token balances, aiming to maintain an approximate 1:1 peg to the underlying asset (e.g., ETH), whereas others keep token balances fixed and reflect rewards through token price appreciation.
For simplicity, unless stated otherwise, we use the term \gls{LST} to refer to {\em non-rebasing} \gls{LST}.

\subsection{Related work}
\label{section:RelatedWork}

\paragraphNoSkip{Staking pools and \glspl{LST}}
As far as the authors are aware, no prior empirical work systematically studies the correlation between staking pool performance and the \gls{LST} prices.
Meng~\cite{meng2024analysis} analyzes drivers of staking inflows and withdrawals in Ethereum.
Two empirical studies~\cite{xiong2024leverage, xiong2024exploring} analyze leveraged staking, and price discrepancies between primary and secondary \gls{LST} markets and related arbitrage frictions, respectively, while Scharnowski and Jahanshahloo~\cite{scharnowski2024economics} examine the liquid staking basis, its determinants, and its contribution to price discovery.
In parallel,~\cite{gersbach2022staking} proposes a game-theoretic model of staking pool formation that examines how reward-sharing rules shape incentives and security.

\paragraphNoSkip{Consensus-layer manipulation in \gls{PoS}}
Chen and Micali~\cite{chen2019algorand} present the Algorand \gls{PoS} protocol and bound the biasability of its leader selection mechanism. Ferreira et al.~\cite{ferreira2022optimal} construct explicit biasing attacks and quantify the adversary's advantage, with tighter bounds in~\cite{ferreira2024computing}. Cai et al.~\cite{cai2024profitable} further show that, under certain timing assumptions, profitable attacks on Algorand are always detectable.

A parallel line of work analyzes the vulnerability of Ethereum's leader election mechanism to similar biasing attacks.
Buterin~\cite{buterin2016validator,buterin2018randao} observes that RANDAO-based randomness beacons allow actors to manipulate outcomes by choosing whether or not to reveal their inputs.
Dworzanski~\cite{dworzanski2018note} formalizes this as a last-mover advantage, where only the final contributor can perfectly predict the effect of its action on the beacon.
Alpturer and Weinberg~\cite{alpturer2024optimal} derive optimal biasing strategies across stake levels, showing that a $20\%$ attacker can increase its block share to $20.68\%$.
Nagy et al.~\cite{nagy2025forking} further show that biasing strategies can be combined with forking, expanding the adversary's action space and increasing the achievable block share to $20.96\%$.
Both works formulate leader-election manipulation as an explicit \gls{MDP} and derive strategies for increasing the adversary's block share via dynamic programming.
In this work, we instead adopt a \gls{DRL} approach that supports multiple manipulation objectives, including maximizing the adversary's own block share and minimizing the victim pool's realized block share; our main focus is the latter.
A broad selection of randomness beacon protocols are surveyed and compared in \cite{choi2023sok}.

A complementary line of work studies strategic forking in Ethereum PoS, including ex-ante reorg attacks~\cite{neuder2021low,schwarz2022three,sarenche2025commitment}, balancing attacks~\cite{neu2021ebb,neu2022two,schwarz2022three}, bouncing attacks~\cite{pavloff2023ethereum}, justification-withholding attacks~\cite{potuz2024justification}, and staircase attacks~\cite{zhang2024max,li2025bunnyfinder}.
Most of these attacks have since been mitigated by protocol updates~\cite{zhang2025available}. In principle, they could also be adapted for consensus-layer manipulation, but typically require a much larger stake share.

\paragraphNoSkip{Pool-insider attacks}
This line of work studies attacks initiated by pool participants rather than external adversaries.
Rosenfeld~\cite{rosenfeld2011analysis} identifies a block-withholding attack against PoW mining pools, where an adversarial participant earns pool rewards by submitting ``locally valid'' blocks while withholding ``globally valid'' blocks.
For staking pools, Tzinas and Zindros~\cite{tzinas2024principal} show that a legitimate node operator can use standard participation privileges to degrade a pool's performance without changing consensus rules.
Similar ideas can also support our cross-layer attacks, as discussed in \Cref{sec:other-consensus-layer-attacks}.

\paragraphNoSkip{Shorting-based attacks}
Prior work theoretically discussed profiting from price drops~\cite{ford2019rationality,tzinas2024principal}.
We go further in several ways.
First, we empirically analyze the association between pool performance and token value.
Second, we provide a manipulation vector for a concrete blockchain protocol, which moreover does not require the adversary to operate the victim's node.
We follow with a taxonomy of available monetization channels, and then estimate an adversary's potential profit via Monte Carlo simulations calibrated using real-world data.

\section{Empirical Pool Performance-LST Price Relationship}
\label{section:EmpiricalData}

We now empirically study the correlation between liquid staking pool performance and associated \gls{LST} prices using real-world data. We focus on Ethereum, the largest \gls{PoS} system with the most active liquid staking ecosystem.

\subsection{Data collection}
To measure staking dynamics, we collect four types of data: (1) \gls{LST} token prices; (2) liquid staking pool tags; (3) the execution-layer and consensus-layer rewards of liquid staking pools; and (4) the effective balances of liquid staking pools. We collect these data from block $15{,}537{,}393$ to block $22{,}677{,}606$, covering the period from the Merge (September 15, 2022)~\cite{buterin2024possible} to June 10, 2025, which corresponds to $1{,}000$ days under Ethereum \gls{PoS}.

\paragraphNoSkip{ETH and \gls{LST} prices}
We collect historical USD-denominated prices of ETH and \gls{LST} tokens from CoinGecko~\cite{coingecko2025api}.
The dataset covers daily prices of six \glspl{LST}, corresponding to six liquid staking pools: Lido, Binance, Coinbase, Ether.fi, Rocket Pool, and Mantle, which collectively include both large and smaller liquid staking pools (see \Cref{sec:staking-pool-distribution} for details).

\paragraphNoSkip{Liquid staking pool tags}
We collect validator public keys and pool tags to map validators to liquid staking pools.
We begin with a dataset that maps Ethereum validators to specific entities~\cite{dune2026eth2staking}, covering $1{,}969{,}253$ validators, of which $1{,}887{,}716$ have pool tags.
We then supplement this dataset with an address-to-pool-tag mapping based on prior work~\cite{nagy2025forking}, which includes $65$ validator addresses associated with $43$ liquid staking pools.

\paragraphNoSkip{Liquid staking pool performance}
For each pool, we measure performance using its {\em \gls{APR}}, a standard annualized measure of staking return.
Intuitively, \gls{APR} summarizes the realized return generated by the pool's validators over a given period, scaled by stake and annualized, and thus captures the end-to-end outcome of the pool's staking operations.

To compute \gls{APR} of each liquid staking pool, we combine rewards from both the consensus layer and the execution layer.
For the consensus layer, we obtain each validator's realized consensus rewards by querying a local Ethereum node via its consensus-layer API~\cite{ethereum2026beaconapi}. 
Specifically, we collect per-epoch attestation rewards, proposer rewards, and sync committee rewards for each validator and aggregate them based on liquid staking pool tags.

For the execution layer, we distinguish between blocks built through MEV-Boost~\cite{flashbots2026mevboost} and blocks built locally.
If a block is built through  MEV-Boost, we obtain the validator's reward from the corresponding MEV-Boost relay API~\cite{flashbots2026mevboostrelay}.
Otherwise, we follow prior work~\cite{yang2025decentralization,oz2024who} and compute the block value directly from on-chain block.
We aggregate these execution-layer rewards using the validator addresses associated with each liquid staking pool.

Finally, we collect daily effective balances of validators associated with each pool from the consensus-layer API and aggregate it to obtain pools' daily effective balances.
We compute a pool's daily staking return as the sum of its consensus-layer and execution-layer rewards divided by its daily effective balance, and annualize this quantity to obtain the pool's \gls{APR}.

\subsection{Empirical analysis}
\paragraphNoSkip{Price normalization}
Since \gls{LST} prices can move with market dynamics, comparing raw USD-denominated prices can mix in a common market factor.
To better study the correlation between pool performance and the corresponding \gls{LST} price ($P_{\mathrm{LST}}$), we normalize each \gls{LST} price by the ETH price ($P_{\mathrm{ETH}}$).
Specifically, we use $P_{\mathrm{LST}}/P_{\mathrm{ETH}}$ as our price measure.

\paragraphNoSkip{Relation between performance and future returns}
To assess whether staking performance is incorporated into \gls{LST} valuation, we use a unified specification that relates contemporaneous performance to subsequent token returns.
We focus on 180-day forward log returns rather than same-day prices, as same-day co-movement does not distinguish whether price changes follow performance or merely occur concurrently.
A forward-return specification better captures whether performance is gradually reflected in market valuation.

To avoid selective sample choices, we use each pool's full available sample and apply the same specification across pools.
Specifically, we report Pearson and Spearman correlations, and regressions of 180-day forward log returns on contemporaneous pool APR.
Standard errors are computed using \gls{HAC}~\cite{andrews1991heteroskedasticity} adjustment to account for serial correlation and heteroskedasticity in daily price series.

\begin{table}[t]
\centering
\captionDescription{Analysis results for the relationship between pool performance and \gls{LST} prices.
Pearson captures linear association, and Spearman captures rank-based monotone association.
Correlations of 0.3-0.5 are moderate, and those above 0.5 are strong~\cite{cohen2013statistical}.
Larger $\beta$ values indicate stronger return sensitivity to APR.
Smaller \gls{HAC} Standard Errors (HACSE) relative to $\beta$ indicate more precise estimates.
Smaller $p$-values indicate stronger evidence.
}
\label{tab:lst-price-performance}
\scalebox{0.95}{
\begin{tabular}{lccccc}
\toprule
Pool & Pearson & Spearman & $\beta$ & HAC SE & $p$-value \\
\midrule
Lido           & 0.391 & 0.387 & 0.1049 & 0.0204 & <0.0001  \\
Binance        & 0.434 & 0.516 & 0.1738 & 0.0270 & <0.0001  \\
Coinbase       & 0.645 & 0.708 & 0.8719 & 0.1359 &  <0.0001 \\
Ether.fi       & 0.407 & 0.302 & 1.7247 & 0.7692 & 0.0250 \\
Rocket Pool     & 0.176 & 0.260 & 0.1142 & 0.0407 &  0.0050  \\
Mantle & 0.112 & 0.315 & 0.1151 & 0.0550 & 0.0365 \\
\bottomrule
\end{tabular}
}
\end{table}

As \Cref{tab:lst-price-performance} shows, both Pearson and Spearman correlations are positive across all six pools, suggesting that better pool performance is generally associated with higher subsequent \gls{LST} returns.
The association is strongest for Coinbase, with Pearson and Spearman correlations of 0.645 and 0.708, respectively, indicating a strong positive relation under common effect-size guidelines~\cite{cohen2013statistical}.
Binance shows a relatively strong association, with correlations of 0.434 and 0.516, while the remaining pools show positive associations of varying magnitudes.

The regression coefficient $\beta$ measures how strongly contemporaneous pool APR associates with future 180-day forward log returns.
All regression coefficients are positive, with the largest estimates observed for Ether.fi and Coinbase.
Together with the \gls{HAC} standard errors and corresponding $p$-values, the results are suggestive of a consistent positive association between higher pool performance and higher future returns.

Taken together, these results show that \gls{LST} prices and pool performance are meaningfully linked, providing an empirical foundation for our subsequent analysis of cross-layer attacks.
Robustness checks with 90-day and 60-day horizons yield qualitatively similar positive relations across pools, as reported in \Cref{sec:robustness-of-price-performance-analysis}.

\section{Model}
\label{section:Model}
We now formally define our model.
Notations are summarized in \cref{section:Glossary}.

We consider a \gls{PoS} consensus mechanism $\mathcal{M}$ with a validator set $\mathcal{V}$, which is partitioned into two disjoint subsets: a strategic adversary $\mathcal{A}$ and a set of honest validators $\mathcal{H}$, such that $\mathcal{V} = \mathcal{A} \cup \mathcal{H}$.
The adversary $\mathcal{A}$ controls cumulative relative stake $\alpha_{\mathcal{A}} \in (0, 1)$, while honest validators $\mathcal{H}$ possess the remaining stake $1 - \alpha_{\mathcal{A}}$.
As with previous work, stake distribution is constant within the scope of an attack.
Finally, we consider adversaries with sufficient capital to execute application-layer financial strategies, e.g., take short positions.

The mechanism $\mathcal{M}$ is defined by two components that govern the evolution of the chain:
\begin{itemize}[leftmargin=*]
\item \textbf{Leader Election ($\mathcal{M}_L$):} This component defines the proposer selection process. Ideally, a validator $v \in \mathcal{V}$ is selected with probability proportional to its stake; intuitively, it determines \emph{who is allowed to propose a block}.
\item \textbf{Fork Choice Rule ($\mathcal{M}_F$):} This determines the blockchain's canonical head by resolving potential conflicts between multiple branches according to a rule that aggregates validators' staking-weighted signals; intuitively, it determines \emph{which proposed block becomes canonical}.
\end{itemize}

\paragraphNoSkip{Example: Ethereum} 
Take Ethereum as an example. The leader election component $\mathcal{M}_L$ in Ethereum is driven by a randomness beacon based on RANDAO~\cite{edgington2023upgrading}.
Time in Ethereum is divided into epochs, each consisting of $32$ slots. In each slot, a single validator is selected as the block proposer, and each valid proposed block contains a 96-byte value, called \texttt{randao\_reveal}. This value is contributed by the proposer and incorporated into a running randomness mix. The mix is updated incrementally across slots and acts as a source of randomness for the protocol. The RANDAO mix at the end of an epoch $e$ is then used as an input for proposer selection and committee assignment in subsequent epochs, together with validators' effective stake balances.
Strictly speaking, this affects epoch $e+2$ in Ethereum, but for simplicity we refer to it as affecting epoch $e+1$ in the remainder of the paper; this convention does not alter any of our results.

The fork choice rule $\mathcal{M}_F$ in Ethereum is implemented by LMD-GHOST~\cite{buterin2020combining}. Validators issue attestations that vote for blocks, and the canonical head of the chain is determined as the block (together with its ancestor chain) that accumulates the largest staking-weighted attestations, considering only the latest message from each validator.
To prevent ex-ante reorg attacks~\cite{schwarz2022three}, proposer boost $p_{\text{boost}}$ is introduced to temporarily increase the effective weight of the proposer's block during fork choice, ensuring that a newly proposed block is favored over competing blocks that have accumulated adversarial attestations.
In the current specification, $p_{\text{boost}}$ is set to 40\% of the total effective stake balance.

\section{Our Attacks}
\label{section:Attacks}

Liquid staking pools span both the consensus layer, where their validators earn rewards through block proposals, and the application layer, where their \glspl{LST} are traded and used as financial assets.
This structure allows an adversary to attack one layer while monetizing the effect in another, making cross-layer attacks a general class of vulnerabilities in liquid staking pools.
In this section, we show how such attacks can be constructed by combining consensus-layer manipulation policies with application-layer financial positions.

\subsection{Consensus-layer manipulation}

\subsubsection{Manipulation landscape}
\label{sec:other-possible-attacks}

As discussed in \Cref{section:RelatedWork}, an adversary can leverage various approaches to degrade the target pool's performance. We summarize these approaches and their corresponding adversary stake requirements in \Cref{tab:attack-taxonomy}.

\begin{table}[th]
\centering
\captionDescription{Stake requirements for different strategies.}
\label{tab:attack-taxonomy}
\small
\begin{tabular}{l c}
\toprule
\textbf{Attack strategy} & \textbf{Adversary stakes} \\
\midrule
Leader election manipulation~\cite{alpturer2024optimal, nagy2025forking} & Low-High \\
Reorganization attacks~\cite{neu2022two,zhang2024max} & Medium-High  \\
Insider attacks~\cite{rosenfeld2011analysis,tzinas2024principal} & Low \\
Network attacks~\cite{heilman2015eclipse} & Zero \\
\bottomrule
\end{tabular}
\end{table}

While all these strategies can degrade the victim pool's performance, for example, reorganization attacks can invalidate proposed blocks or attestations via strategic forking, they typically have different requirements on the adversary's stake share and guarantees.
Reorganization attacks can be effective, but they often require a relatively large stake share to succeed. 
By contrast, insider and network-layer attacks are harder to quantify and lack clear performance guarantees. The former involves an adversary sabotaging the pool from within as a participant, whereas the latter relies on disrupting connectivity to hinder operational correctness.
We provide more details on these attack vectors in \Cref{sec:other-consensus-layer-attacks}.

In this section, we focus on consensus-layer manipulation via leader election, because prior work shows that even low-stakes adversaries can bias leader election~\cite{alpturer2024optimal,nagy2025forking}, and such manipulation has more predictable implications.

\subsubsection{Straw-man solution and limitation}
Manipulating the leader election mechanism allows the adversary to degrade a target liquid staking pool's performance by reducing its realized share of block proposals relative to its ideal, stake-proportional allocation.

A natural approach to modeling such manipulations is by casting the leader election and fork choice process as an \gls{MDP}.
Formally, an \gls{MDP} is a tuple $(S, A, P, R)$, where $S$ and $A$ denote the states and actions available to the agent, while $P$ and $R$ define the transition model and rewards.
Specifically, the probability of transitioning from state $s \in S$ to state $s' \in S$ when taking action $a \in A$ is given by $P_a(s, s') = \Pr(s_{t+1} = s' \mid s_t = s, a_t = a)$, and the agent's reward for the transition is given by $R_a(s, s')$.
A policy $\pi: S \to A$ is a function that specifies the action taken by the agent in each state.

Previous research~\cite{alpturer2024optimal,nagy2025forking} has modeled manipulation strategies targeting Ethereum's leader election and fork choice rules as an \gls{MDP}, with this approach being standard for consensus attacks \cite{eyal2014majority,gervais2016security,yaish2022blockchain,yaish2023uncle,bar2025mad}.
In this context, the adversary seeks to derive a policy that maximizes its long-term expected utility, specifically by augmenting its relative share of block proposals at the expense of honest participants.
This can be performed through the application of classical techniques, such as policy or value iteration~\cite{puterman2014markov}.

However, this approach also has a well-known limitation: the explosion of state space~\cite{hou2021squirrl}.
To model the attack accurately, a state $s \in \mathcal{S}$ must at least capture the 256-bit RANDAO mix, the proposer allocation for the current and look-ahead epochs, and the relative fork status (i.e., private vs. public chain heights).
In particular, \cite{nagy2025forking} shows that even in a single-adversary setting, the state space can grow rapidly, making an explicit MDP formulation infeasible.
This limitation motivates us to adopt a learning-based approach that does not require explicit enumeration of the state space. In particular, prior work has shown that deep reinforcement learning can be effective in studying selfish mining strategies~\cite{hou2021squirrl,barzur2023werlman,sarenche2025bitcoin}, suggesting its suitability for problems with a similar structure.

\subsubsection{Deep reinforcement learning}
In this paper, we adopt a \gls{DRL} approach to study consensus-layer manipulation against a target liquid staking pool. Our \gls{DRL} formulation differs from prior work in two ways. Methodologically, it avoids explicit state-space enumeration and instead learns policies directly from interaction with the simulated protocol environment. Substantively, it optimizes for targeted degradation of a victim pool's realized performance, rather than only for the adversary's own proposer gains.

\Cref{fig:rl-framework} presents the overview of our \gls{DRL} framework.
First, according to the consensus $\mathcal{M}$, we build an environment that simulates the execution of its leader election and fork choice components.
Then, the adversary is modeled as an RL agent that controls only its own validator actions, while honest validators follow the consensus mechanism $\mathcal{M}$ and act according to the protocol without learning or adaptation.
At each slot, the agent observes the system state, takes an action, and receives a reward that reflects the effect of its strategy on both adversarial gain and honest-validator loss under protocol-compliant execution.

\begin{figure}
    \centering
    \includegraphics[width=0.65\linewidth]{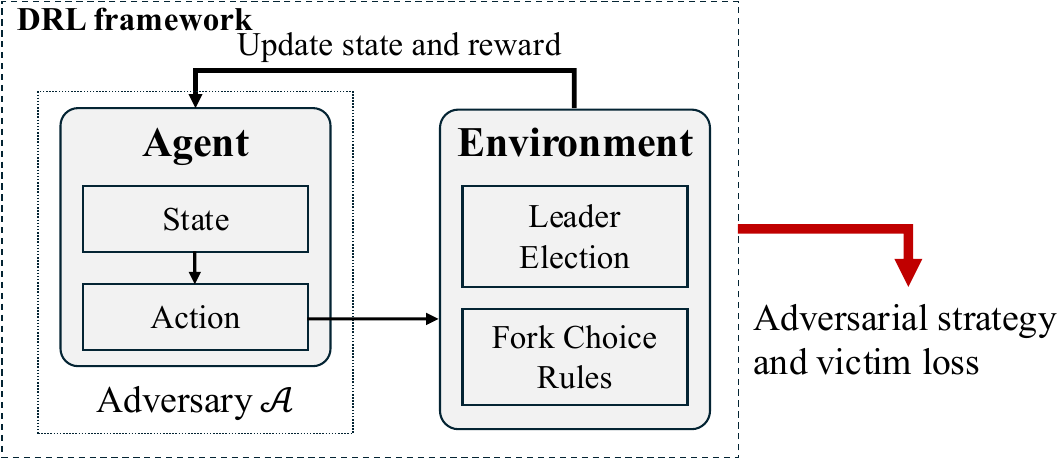}
    \captionDescription{\gls{DRL} framework for automatic discovery of consensus-layer adversarial strategies.
    }
    \label{fig:rl-framework}
\end{figure}

\paragraphNoSkip{Environment}
The environment serves as a formal abstraction of the consensus layer, specifically isolating the operational logic of the leader election mechanism and the fork choice rule (i.e., LMD-GHOST).
To maintain computational tractability, we assume a synchronous network with instantaneous broadcasting.
Within this framework, the environment dictates the system's evolution: it assigns block proposers for each slot and evaluates the canonical chain according to the prescribed fork choice rule.
The environment acts as the reactive entity that processes the adversary's actions to determine the next state and the corresponding reward $R$, reflecting the collective protocol response under these idealized conditions.

\paragraphNoSkip{Action space $A$}
For Ethereum, we follow prior work~\cite{nagy2025forking} to define a discrete action space $A$, which consists of the following actions in a given slot.
(1) \textbf{Propose ($a_{\text{prop}}$):} the agent honestly publishes a valid block extending the current public canonical head.
(2) \textbf{Miss ($a_{\text{miss}}$):} the agent intentionally skips its assigned proposal, thus not contributing entropy to the beacon.
(3) \textbf{Hide ($a_{\text{hide}}$):} the agent builds a block privately without broadcasting it to the network, withholding its entropy from public view.
(4) \textbf{Fork ($a_{\text{fork}}$):} the agent broadcasts a previously hidden private chain to trigger a reorganization (reorg), which has two effects: excluding honest blocks (particularly those by the target pool $\mathcal{T}$) from the canonical chain, and altering their entropy contributions to the beacon.
(5) \textbf{Regret ($a_{\text{regret}}$):} the agent abandons its private fork and resynchronizes with the public canonical chain.

\begin{figure}[htbp]
    \centering
    \includegraphics[width=0.65\linewidth]{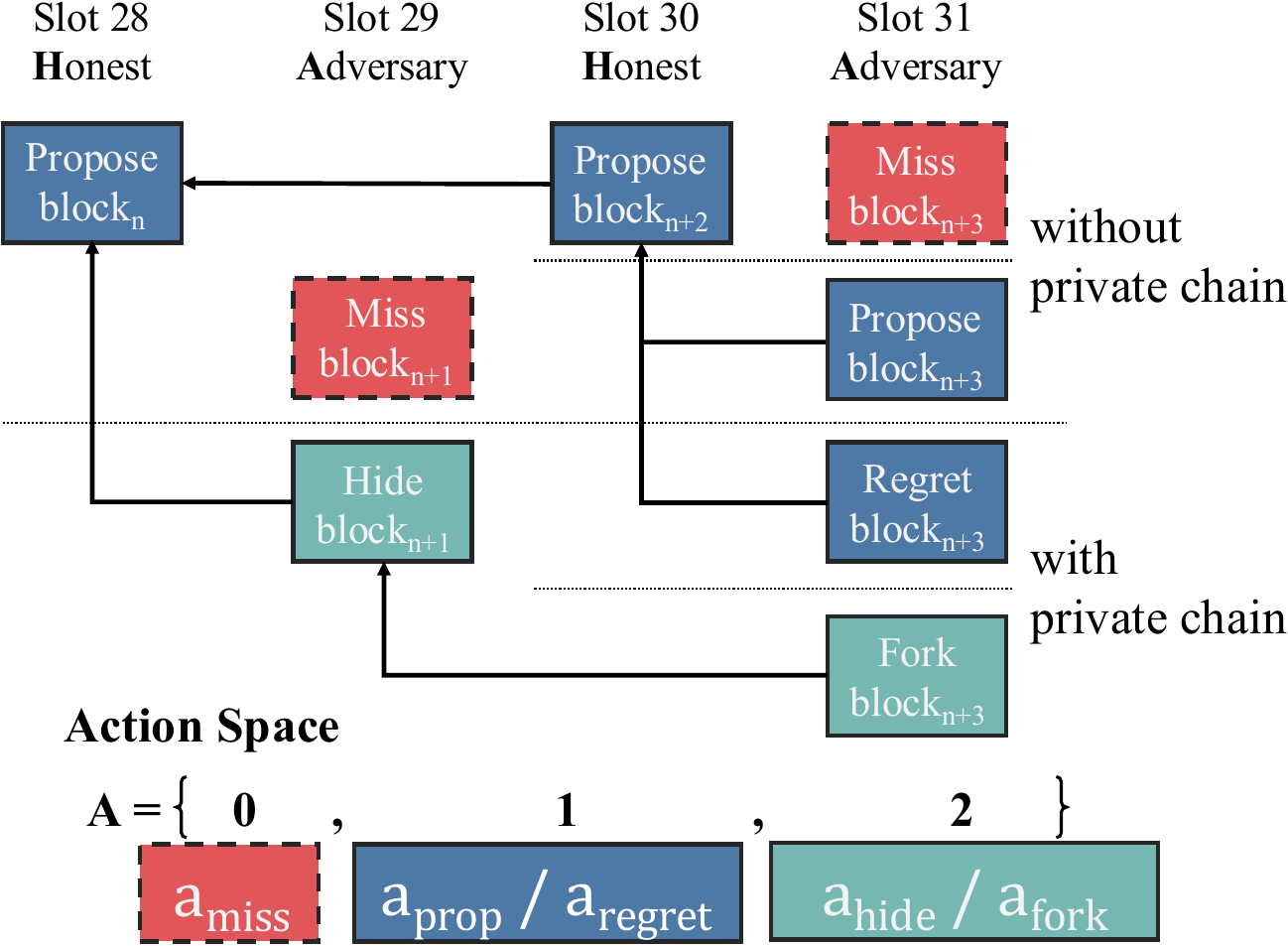}
    \captionDescription{The action space of the adversary.}
    \label{fig:action-space}
\end{figure}

While the protocol allows for these actions, we employ action consolidation to a compact space $A \in \{0, 1, 2\}$.
Specifically, $a = 0$ corresponds to $a_{\text{miss}}$, in which the agent intentionally skips its block proposal.
Action $a = 1$ represents honest behavior: when no private chain exists, it maps to $a_{\text{prop}}$; otherwise, it maps to $a_{\text{regret}}$, in which the adversary abandons the private chain and proposes honestly.
Action $a = 2$ corresponds to private chain manipulation: when no private chain exists, it maps to $a_{\text{hide}}$; when one exists and a competing honest block is observed, it maps to $a_{\text{fork}}$, initiating an ex-ante reorganization to displace the honest block.

This reduction follows from protocol feasibility constraints and the timing of meaningful adversarial decisions.
Once the adversary initiates a private chain, it must repeatedly execute the action  $a_{\text{hide}}$ to accumulate sufficient attestation weight; deviating earlier is infeasible if a reorganization is to succeed.
As a result, the adversary's only non-trivial decision point arises after a private chain exists and an honest validator publishes a competing block, at which moment the adversary chooses whether to adopt the action $a_{\text{fork}}$ or $a_{\text{regret}}$.

\paragraphNoSkip{State representation and observation design}
To make learning tractable, we transform the high-dimensional environment state $S$ into a compact, 7-dimensional observation vector $\phi(s) \in \mathcal{O}$, which can be categorized into three functional groups:
\begin{itemize}[leftmargin=*]
    \item \textit{Branch:} A binary indicator signaling the presence of a withheld private chain, which defines the agent's permitted actions by determining the feasibility of $a_{fork}$ or $a_{regret}$.
    \item \textit{Counterfactual strategic oracles:} A key difficulty in leader manipulation is that rewards are sparse and delayed: the effect of an action is only realized in the block allocation of the subsequent epoch. To provide the agent with denser, action-specific feedback, we augment the observation space with counterfactual oracles. An oracle helps the adversary estimate the expected block allocation in the next epoch under different hypothetical actions at the current slot. To make these estimates tractable, when evaluating a hypothetical action at a given slot, we assume that all subsequent adversarial slots in the remaining tail of the current epoch follow the default honest strategy (i.e., propose the block).

    Concretely, we define three oracles $\{\Omega_{prop}, \Omega_{skip}, \Omega_{fork}\}$. Each oracle returns two numbers: the expected block allocation for the adversary $\mathcal{A}$ and the target pool $\mathcal{T}$ in epoch $e+1$ under the corresponding hypothetical action, denoted as $\mathbb{E}[B^{(\mathcal{A})}_{e+1}]$ and $\mathbb{E}[B^{(\mathcal{T})}_{e+1}])$, respectively.
\end{itemize}

\paragraphNoSkip{Reward $R$} The design of the reward function $R$ is central, as it must capture the adversary's strategic objective beyond simple profit maximization.
Our formulation differs from prior work~\cite{alpturer2024optimal,nagy2025forking} in that the agent is not trained merely to maximize its own block share, but to induce a long-term degradation of the target pool $\mathcal{T}$'s performance.
We therefore construct a composite reward function $\mathcal{R}$ evaluated at epoch boundaries, ensuring that protocol causalities, especially the finality of the RANDAO mix, are respected:
\[
R =  \beta \cdot \left(\mathbb{E}[B_{e+1}^{(\mathcal{A})}] - B_{\text{loss},e}^{(\mathcal{A})}\right) - \omega \cdot \left(\mathbb{E}[B_{e+1}^{(\mathcal{T})}] - B_{\text{loss},e}^{(\mathcal{T})}\right) + \gamma \cdot L_{\text{tail}}^{e+1}.
\]

Here, $\mathbb{E}[B_{e+1}^{(\mathcal{A})}]$ and $\mathbb{E}[B_{e+1}^{(\mathcal{T})}]$ denote the expected number of slots assigned to the adversary and the target pool $\mathcal{T}$ in the subsequent epoch, respectively.
The loss terms $B_{\text{loss},e}^{(\mathcal{A})}$ and $B_{\text{loss},e}^{(\mathcal{T})}$ capture forfeited blocks caused by withholding decisions in epoch $e$, while $L_{\text{tail}}^{e+1}$ measures whether the adversary gains influence over the tail of the next epoch.

The weighting coefficients $\{\beta,\omega,\gamma\}$ allow the framework to represent different objectives.
By tuning these parameters, we can recover classical attacks or define new ones.
When $\omega = 0$, the reward reduces to the adversary's own expected value accumulation, aligning with classical selfish-mixing objectives.
Conversely, setting $\beta = 0$ yields a griefing adversary that ignores its own economic losses and acts solely to suppress the target pool $\mathcal{T}$ while preserving strategic continuity.
Under this profile, the adversary is willing to sacrifice all of its own block value in order to minimize the target pool's performance.

\subsubsection{MEV-aware extension}
Prior analytical models usually assume homogeneous block values.
To capture high-value opportunities arising from mempool fluctuations and \gls{MEV}, we extend the environment with a reduced-form shared-opportunity model inspired by prior work~\cite{barzur2023werlman,bar2025mad}.

Specifically, we maintain a bounded system-wide pool $P_s \in \{0,\dots,C-1\}$ of pending high-value opportunities at slot $s$.
At each slot, a new opportunity arrives independently with probability $\delta$ and is added to the pool up to capacity $C$:
\[
P_{s+1} = \min(P_s + X_s, C),
\qquad X_s \sim \mathrm{Bernoulli}(\delta).
\]
Each block can consume at most one pending opportunity.
Given $F>0$ denotes the additional value of a high-\gls{MEV} opportunity, a canonical block's realized value is
\[
v_s =
\begin{cases}
1+F,& P_s>0,\\
1,& P_s=0,
\end{cases}
\]
After consumption, the pool decreases by one whenever $P_s>0$.
If a high-value block is displaced by the fork choice rule, the corresponding opportunity is returned to the pool and may be inherited by the next canonical block.
This captures the fact that many high-value opportunities persist briefly after displacement and can still be captured by later canonical proposers, rather than being exhausted by the first proposer alone.

\paragraphNoSkip{State representation}
In the MEV-aware extension, we augment the observation with value-aware state variables:
the realized value carried by the adversary's withheld private branch $V_s^{(\mathrm{priv})}$,
the realized value of the competing public segment $V_s^{(\mathrm{pub})}$,
and the current-slot value $V_s^{(\mathrm{cur})}$ (i.e., the value obtainable if the current slot were made canonical immediately).

As branches evolve, adversarial withheld blocks contribute to $V_s^{(\mathrm{priv})}$, while newly appended honest blocks contribute to $V_s^{(\mathrm{pub})}$ for as long as the corresponding public segment remains contestable.
This reflects our Ethereum-specific setting, where the adversary withholds first and forks later, while future slot values remain unobserved at decision time.

\paragraphNoSkip{Reward adaptation}
We incorporate these value-aware states into the reward through the net realized value effect on each party within the current fork window.
Let
\(
\Delta V_e^{(\mathcal{A})}
\)
and
\(
\Delta V_e^{(\mathcal{T})}
\)
denote the net value gained or lost by the adversary and the target pool, respectively, within the current epoch.
These terms capture both positive value retained through successful fork outcomes, including \gls{MEV} opportunities that are realized by displacing competing blocks, and negative value lost due to sacrifice, failed withholding, reversion, or displacement from the canonical chain.
Because future high-value opportunities are unknown at decision time, the next-epoch allocation terms remain expectation-based.
Thus, the reward is defined as
\[
R =
\beta \cdot \left(\mathbb{E}[B_{e+1}^{(\mathcal{A})}] + \Delta V_e^{(\mathcal{A})}\right)
-
\omega \cdot \left(\mathbb{E}[B_{e+1}^{(\mathcal{T})}] + \Delta V_e^{(\mathcal{T})}\right)
+
\gamma \cdot L_{\text{tail}}^{e+1}.
\]
This preserves next-epoch allocation effects while making current fork decisions value-aware.

\paragraphNoSkip{Strategy discovery}
Overall, our \gls{DRL} framework allows the adversary to learn different manipulation strategies, ranging from maximizing its own slot allocation to minimizing the victim pool's slot allocation. We evaluate these behaviors in \Cref{sec:DRL-effectiveness}.

\subsection{Application-layer manipulation}
\label{sec:application-layer-manipulation}
The empirical relation between pool performance and \gls{LST} valuation creates several possible application-layer profit mechanisms for an adversary.
At a high level, the adversary benefits from private knowledge about the timing and direction of a performance-induced price movement: it can open directional positions before the market fully incorporates the degradation, or compete for downstream opportunities that arise once prices and oracle values adjust.

We group the adversary's application-layer monetization strategies into four classes.
First, the adversary can short the target \gls{LST} through lending markets, leveraged credit accounts, or derivatives.
Second, it can use prediction markets when contracts exist for pool-specific or token-price events.
Third, if the price drop pushes existing collateralized positions below liquidation thresholds, the adversary can compete to capture liquidation opportunities.
Fourth, it can arbitrage temporary price differences across different exchanges as the \gls{LST} price adjusts.
These strategies differ in capital requirements, execution risk, and exposure to \gls{MEV} competition~\cite{syang2024mevsok}; we provide a detailed taxonomy in \Cref{sec:application-layer-monetization-details}.

In the remainder of the paper, we use {\em leveraged shorting} through lending protocols as a representative monetization channel.
This choice is conservative in the sense that it does not require winning competition for a one-block \gls{MEV} opportunity, unlike liquidation or atomic arbitrage, and it exposes the adversary to explicit frictions such as liquidation risk, slippage, and borrowing cost.
Concretely, the adversary deposits neutral collateral $C$, borrows and sells the target \gls{LST}, and later repurchases the \gls{LST} to repay the debt.
With effective \gls{LTV} ratio $\rho$ and $m$ recursive leverage rounds, this creates a notional short exposure of
$
N = \rho C \cdot \frac{1-\rho^{m+1}}{1-\rho}.
$
Ignoring frictions, a normalized price drop $\Delta_p$ yields gross payoff proportional to $N\Delta_p$.
Our evaluation in \Cref{sec:application-layer-estimation} then adds frictions that determine whether this exposure is profitable: market noise, slippage, liquidation risk, and borrowing costs.
The detailed lending-based short construction is deferred to \Cref{sec:lending-based-shorting}.

\section{Evaluation Results}
\label{section:Results}

In this section, we evaluate the effectiveness of our proposed attacks.
To avoid any impact on real-world systems, all evaluations are conducted in a simulated environment.

\subsection{Effectiveness of DRL framework}
\label{sec:DRL-effectiveness}

To evaluate whether our \gls{DRL} framework can automatically discover effective leader-election manipulation strategies, we design the following research questions.

\begin{itemize}[leftmargin=*]
\item \textbf{RQ1:} Can our framework recover theoretically optimal strategies identified in prior work?
\item \textbf{RQ2:} How do adversarial profiles ranging from pure profit-seeking to pure griefing impact the performance of the target liquid staking pool?
\item \textbf{RQ3:} How does \gls{MEV} affect the manipulation strategies learned by our framework?
\end{itemize}

Throughout the evaluation, we focus on sub-$\frac{1}{3}$ stake adversaries.
This is motivated by two considerations.
First, the adversary with higher stake shares can launch catastrophic consensus-level attacks~\cite{ethereum2025posattackdefense}.
Second, the current Ethereum staking distribution~\cite{dune2026eth2staking} shows that the largest staking entity, Lido, controls less than 30\% of the total stake.

\paragraphNoSkip{Implementation}
We built a simulated Ethereum environment using Gymnasium~\cite{towers2024gymnasium} to model Ethereum's leader election mechanism and fork choice rules.
Specifically, it includes three validator groups: the attacker, the victim, and the remaining honest validators, parameterized by their stake shares. Proposer schedules are deterministically derived from the RANDAO mix using a hash-based pseudorandom process with stake-weighted sampling.

We train the attacker policy using Maskable PPO, a variant of Proximal Policy Optimization (PPO)~\cite{schulman2017proximal,huang2020closer,stable-baselines3}.
We use Maskable PPO because our environment contains state-dependent invalid actions, and action masking allows the policy to respect protocol constraints during training while improving sample efficiency.
For each adversary-victim stake configuration, we train a separate policy on episodes generated from epoch-level proposer allocations derived from independent random seeds. Each training run lasts up to $10^6$ timesteps, and performance is evaluated through repeated simulation runs. More implementation details and hyperparameter settings are deferred to \Cref{sec:ablation-study}.

\subsubsection{RQ1: Discovering optimal strategy}
To answer RQ1, we set $\omega = 0$ so that our \gls{DRL} framework prioritizes the adversary's block share, disable the MEV-aware extension and assume homogeneous block values, aligning with self-optimization strategies studied previously~\cite{alpturer2024optimal,nagy2025forking}.

To determine the best weighting coefficients $\beta,\gamma$, we perform a grid search over an $8 \times 8$ parameter grid.
For each coefficient, we consider the values 0.1, 0.5, 1.0, 1.5, 2.0, 2.5, 3.0, and 5.0.
We set the adversary's stake share to $33.3\%$ to calibrate the weighting coefficients.
After the grid search, we find that when $\beta = 1.0, \gamma = 0.5$, the adversary achieves the highest block share, measured as the average block share over $10{,}000$ simulation runs after training.

We then vary the adversary's stake share from $1\%$ to $33\%$ in steps of $1\%$. For each stake share, we train five \gls{DRL} models with different random seeds and evaluate each learned strategy over $10{,}000$ independent randomized simulations. We measure effectiveness using the same metric as before, namely the average number of slots assigned to the adversary in the next epoch, and report the mean across the five training seeds.

\begin{figure}
    \centering
    \includegraphics[width=0.65\linewidth]{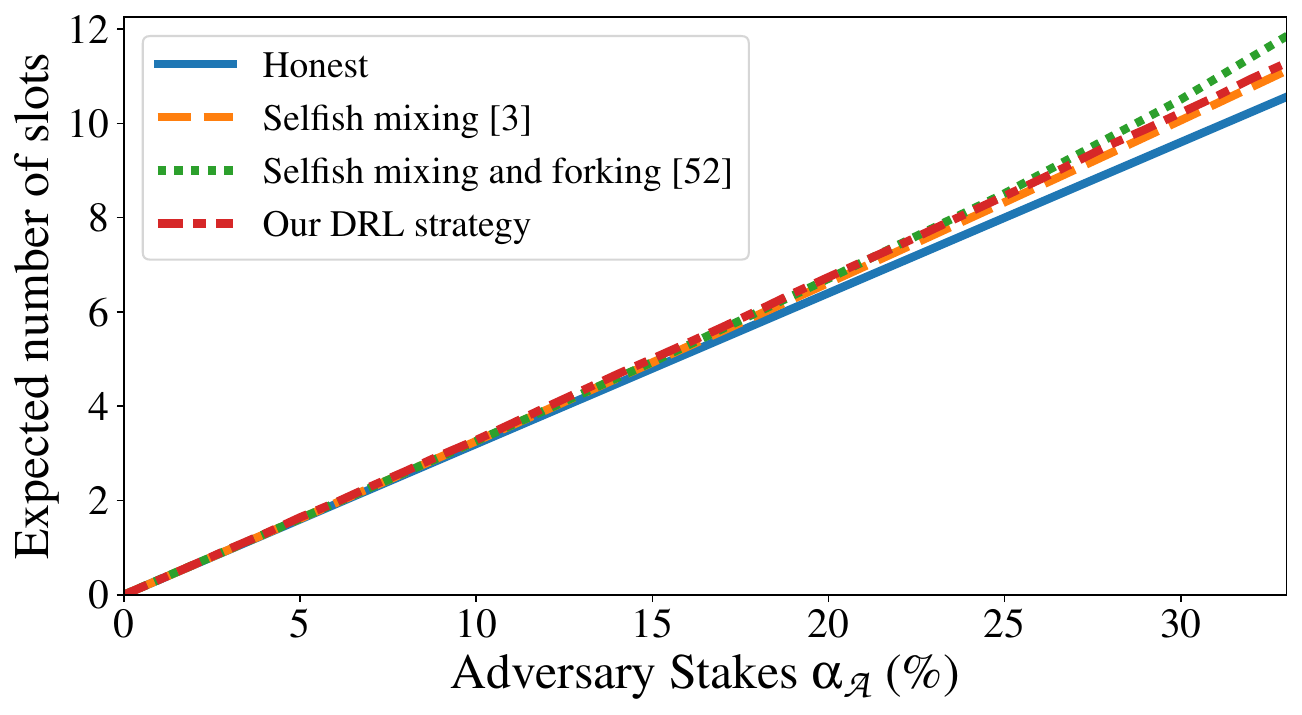}
    \captionDescription{Expected adversarial slots under different strategies and adversary stake.
    }
    \label{fig:optimizing-block-share}
\end{figure}

As \Cref{fig:optimizing-block-share} shows, we compare the effectiveness of different strategies: the honest strategy, the selfish-mixing strategy of~\cite{alpturer2024optimal} (where the adversary only chooses between honest proposing and deliberately missing), and the selfish-mixing-and-forking strategy of~\cite{nagy2025forking} (which additionally allows forking).
The strategy learned by our \gls{DRL} framework consistently outperforms selfish-mixing at all adversary stake levels.
Notably, when the stake is below $25\%$, it performs comparably to the selfish-mixing-and-forking strategy, suggesting that our framework can automatically recover near-optimal strategies from prior work.

We observe a small performance gap between our learned strategy and the analytically optimal strategy when the adversary's stake share exceeds $25\%$.
This is expected: prior work gives theoretical optima under idealized modeling assumptions, whereas our \gls{DRL} framework approximates optimal behavior through finite training in a stochastic environment, with a larger decision space at higher stake shares.
Since this experiment aims to validate whether our framework can automatically discover effective strategies rather than exactly reproduce the closed-form optimum, this small gap does not affect our main conclusion.

\begin{tcolorbox}[title=Answer to \textbf{RQ-1}, left=2pt, right=2pt, top=0pt,bottom=0pt]
Our \gls{DRL} framework can automatically recover strategies that maximize the adversary's own block allocation, with performance that remains very close to the theoretical optimum established in prior work.
\end{tcolorbox}

\subsubsection{RQ2: Pure-griefing strategy discovery}
Having validated the efficacy of our \gls{DRL} framework in replicating theoretically optimal strategies (RQ1), we now focus on its capacity for autonomous strategy discovery in a different scenario.
Specifically, whether it can automatically discover effective strategies against a particular liquid staking pool, hereafter referred to as the \textit{victim}.
To isolate this strategic effect, we continue to assume homogeneous block values and disable MEV-aware extension.

To bias our \gls{DRL} framework toward ``pure griefing,'' we set $\beta = 0$ and perform the same $8 \times 8$ grid search to determine the optimal weighting coefficients $\omega$ and $\gamma$.
Specifically, we fix the adversary's and victim's stake shares at $\alpha_{\mathcal{A}} = 20\%$ and $\alpha_{\mathcal{T}} = 20\%$, respectively, to calibrate these coefficients.
The grid search shows that $\omega = 5.0$ and $\gamma = 1.5$ minimize the victim's average number of slots assigned in the subsequent epoch.

Upon calibrating the coefficients, we extend our evaluation to a broad range of stake distributions that reflect current liquid staking pools.
Specifically, we vary both the adversary's and the victim's stake shares from $5\%$ to $30\%$ in $5\%$ increments.
This range is chosen to approximate the current distribution of major staking entities, such as Lido, which controls nearly 30\% of all stake, and Ether.fi, which controls about  5\% of the total stake.

For each stake configuration, we train a separate \gls{DRL} agent under the calibrated reward setting.
We follow the same training protocol as in RQ1, using five independent training seeds per configuration.
For each seed, we evaluate the learned policy over $10{,}000$ independent randomized simulations and report the mean of the resulting metrics across seeds.

\paragraphNoSkip{Victim loss} To quantify the adversary's impact, we use two complementary metrics: \emph{victim loss} and \emph{adversary loss}, defined as
\[
\textit{victim loss}
=
\frac{B^{(\mathcal{T})}_{e+1} - 32 \cdot \alpha_{\mathcal{T}}}{32 \cdot \alpha_{\mathcal{T}}},
\qquad
\textit{adversary loss}
=
\frac{B^{(\mathcal{A})}_{e+1} - 32 \cdot \alpha_{\mathcal{A}}}{32 \cdot \alpha_{\mathcal{A}}}.
\]
These metrics measure the reduction in victim and adversary slot allocations relative to their respective ideal allocations.
Note that lower values indicate larger reductions in allocation.

\begin{figure}[t]
    \centering
    \includegraphics[width=0.55\linewidth]{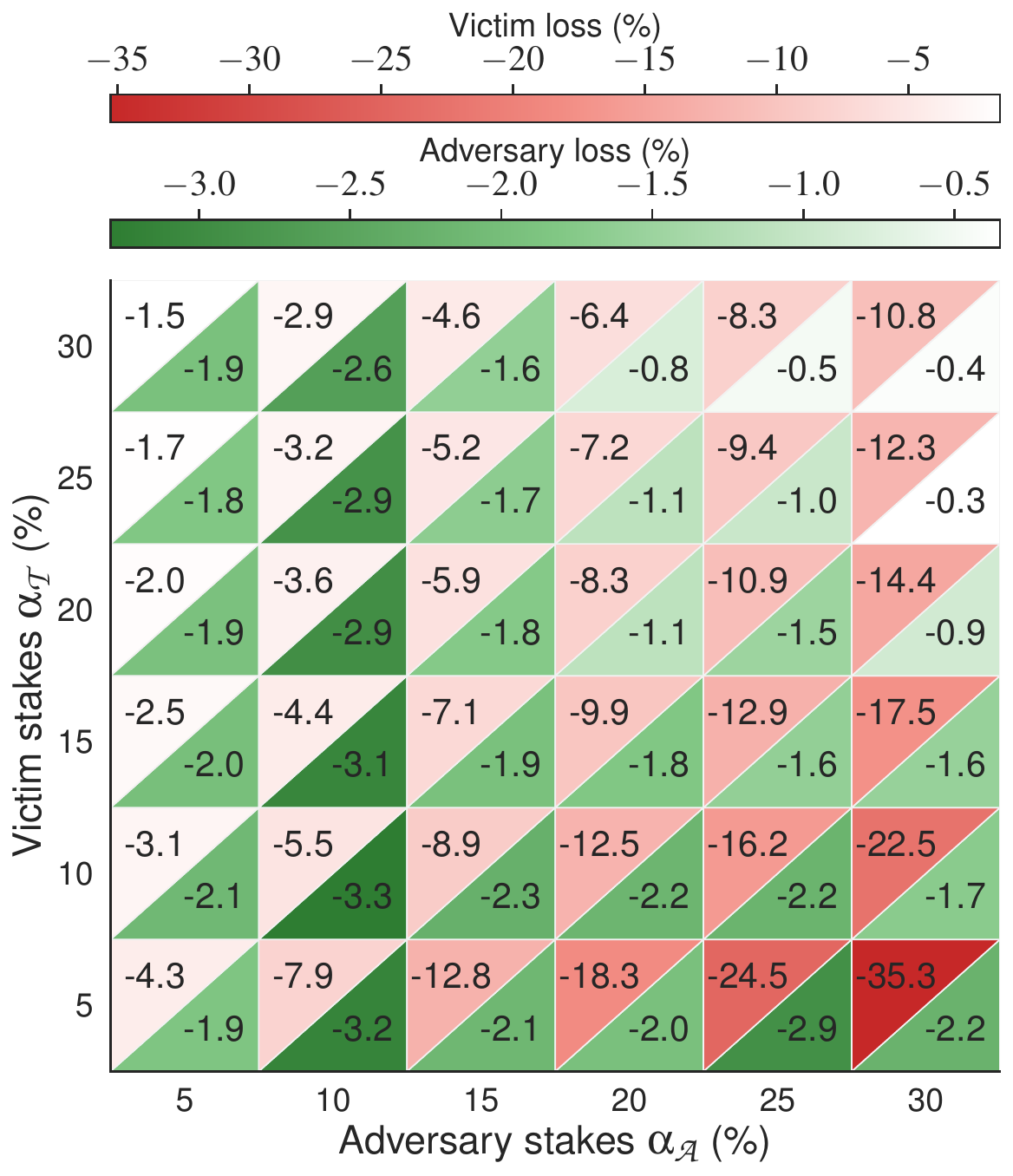}
    \captionDescription{A heatmap of the impact of learned pure griefing strategy under varying stake distributions. Each cell is divided into two triangles: the upper-left triangle (red) shows the victim's relative slot loss, while the lower-right triangle (green) shows the adversary's relative slot loss (or increase).
    More negative values are worse for the victim (red), while less negative values are better for the adversary (green).
    }
    \label{fig:profit-loss-heatmap}
\end{figure}

\Cref{fig:profit-loss-heatmap} summarizes attack outcomes across different adversary and victim stake shares.
Adversary loss is often small and negative, indicating a strategic trade-off in which the attacker incurs limited cost to amplify the victim's performance degradation.
Notably, victim loss consistently exceeds adversary loss across nearly all stake combinations.

For example, when $\alpha_{\mathcal{A}} = 10\%$ and $\alpha_{\mathcal{T}} = 30\%$, the adversary incurs only a $2.6\%$ loss while causing a $2.9\%$ loss for the victim. This shows that even a relatively small adversary can use limited self-sacrifice to amplify the victim's performance degradation. The asymmetry is even more pronounced when both the adversary and the victim control $30\%$ of the stake. In this case, the adversary suffers only a $0.4\%$ loss, whereas the victim suffers a $10.8\%$ loss.

Another observation we can make from~\Cref{fig:profit-loss-heatmap} is that the victim loss generally decreases as the victim's stake increases; however, this does not mean the attack becomes less effective. Because larger staking pools control more slots, even minor relative losses result in significant absolute damage. For instance, a fivefold increase in victim stake (from 5\% to 25\%) only yields a one-third reduction in relative loss, falling short of a proportional one-fifth improvement.
This suggests that large liquid staking pools remain vulnerable to disproportionate absolute damage from this strategy.

\paragraphNoSkip{Chain quality impact}
Pure griefing can be more damaging to chain quality than self-optimization because the adversary may accept additional block loss to reduce the victim pool's performance.
We measure this effect using {\em chain-quality impact}: the percentage of epoch slots in which an otherwise-canonical block is either sacrificed by the adversary or displaced by an adversarial fork.
Higher values indicate stronger degradation of canonical-chain utilization: more slots are wasted or replaced by the attack, reducing throughput and weakening chain-quality guarantees.

\Cref{fig:rq2-chain-quality} shows that pure-griefing strategies consistently cause higher chain-quality impact than self-optimization.
Under this objective, the learned strategy leads to a clearer degradation in chain utilization: with $30\%$ stake, more than $1.7\%$ of epoch slots are sacrificed or displaced by the adversary.
The effect is slightly stronger for smaller victim pools, but this variation is modest compared with the gap from the self-optimization baseline.

\begin{figure}
    \centering
    \includegraphics[width=0.75\linewidth]{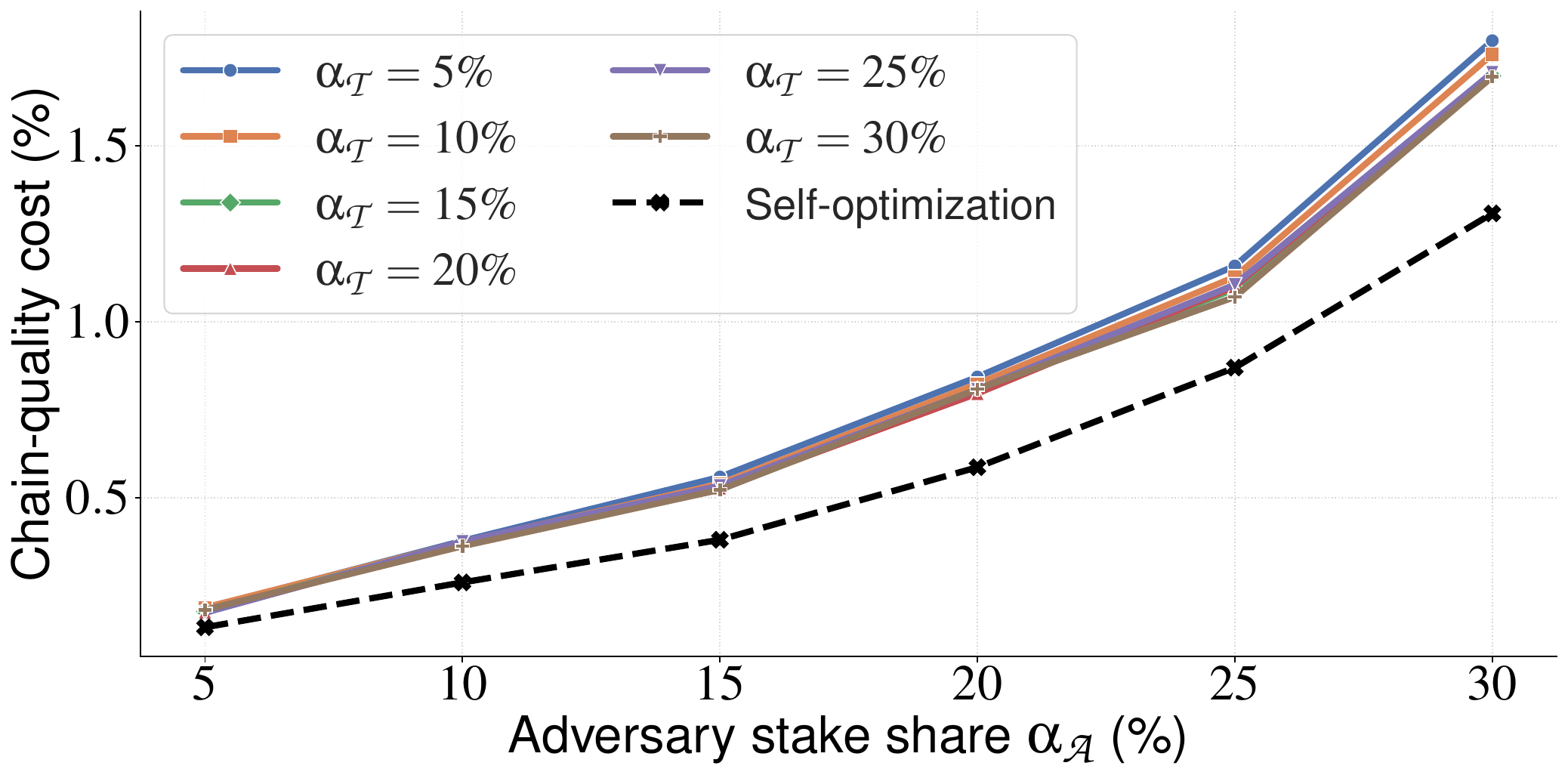}
    \captionDescription{Chain-quality impact of pure-griefing strategies.
    }
    \label{fig:rq2-chain-quality}
\end{figure}

\begin{tcolorbox}[title=Answer to \textbf{RQ-2}, left=2pt, right=2pt, top=0pt,bottom=0pt]
Our \gls{DRL} framework can also discover pure-griefing strategies that cause disproportionate victim loss with limited adversarial self-loss, while also imposing a larger chain-quality cost than self-optimization.
\end{tcolorbox}

\subsubsection{RQ3: Effect of MEV}
RQ2 shows that, under homogeneous block values, our \gls{DRL} framework can automatically discover effective griefing strategies.
We now ask whether the learned strategies differ once block values become heterogeneous due to \gls{MEV}.

To study this effect, we fix a representative adversary-victim stake pair with $\alpha_{\mathcal{A}}=\alpha_{\mathcal{T}}=20\%$ and vary the frequency and magnitude of \gls{MEV} opportunities.
This allows us to examine whether \gls{MEV} shifts the policy from pure allocation manipulation toward value-aware behavior, and how this shift affects adversary cost, victim loss, and chain-quality cost.

\paragraphNoSkip{MEV settings}
We instantiate the \gls{MEV}-aware extension by varying two parameters: the arrival probability $\delta$, which controls how often high-value opportunities enter the pool, and the bonus value $F$, which controls the additional value of a high-\gls{MEV} block.
We sweep $\delta \in \{0.05,0.10,0.15,0.20\}$ and $F \in \{1,2,5,10\}$, and fix $C=3$ as in prior work~\cite{barzur2023werlman}.

For comparability with RQ1 and RQ2, for each $(\delta,F)$ choice, we train separate \gls{DRL} agents using the calibrated self-optimization and pure-griefing reward weights from RQ1 and RQ2, and evaluate learned policies over the same number of independent simulations.

\paragraphNoSkip{Results}
We first compare the learned policies with the RQ1 and RQ2 baselines to check whether our framework recovers effective self-optimization and pure-griefing strategies under heterogeneous block values.
Due to space limits, we report the detailed validation results in \Cref{app:rq3-validation}.
The takeaway is that the learned policies achieve adversary value gains and victim losses comparable to the corresponding RQ1 and RQ2 baselines, suggesting that the framework learns effective strategies in the \gls{MEV}-aware setting.

\begin{figure}[thbp]
    \centering
    \includegraphics[width=0.9\linewidth]{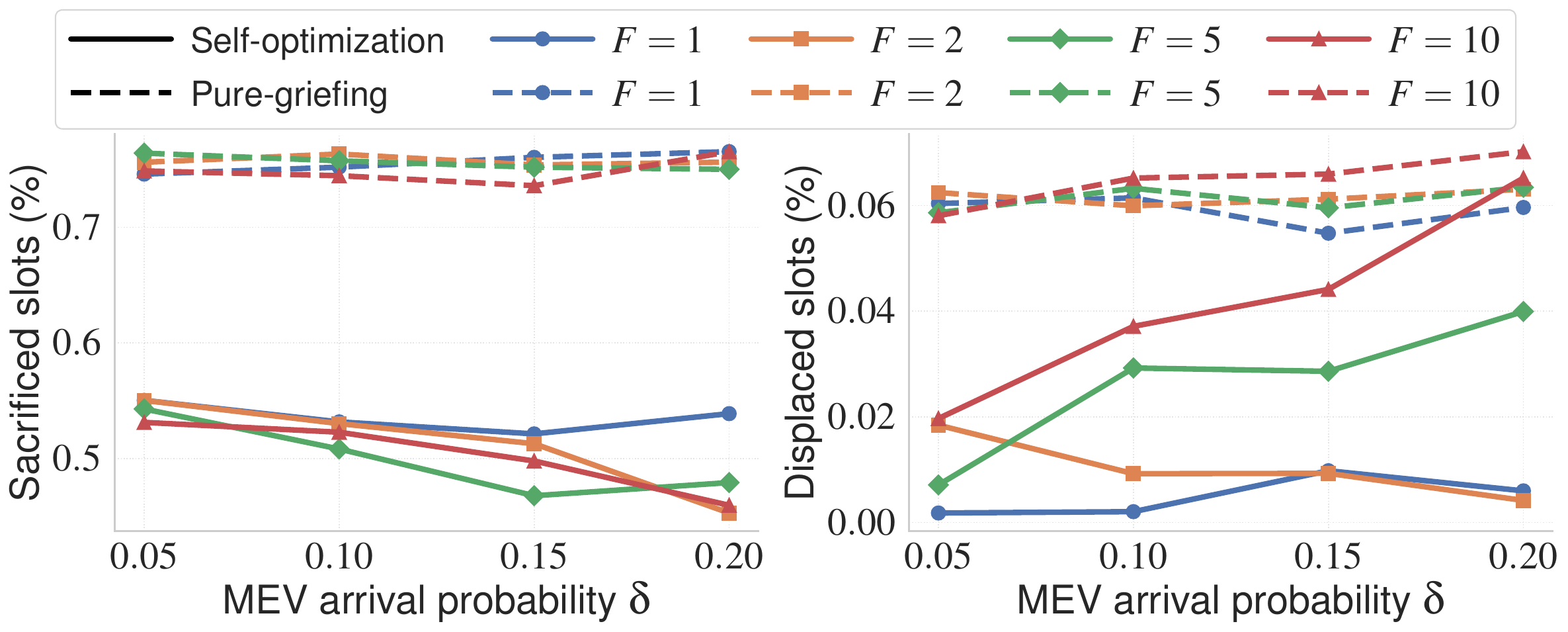}
    \captionDescription{Effect of MEV on the chain-quality cost of learned policies. The left panel reports sacrificed slots, and the right panel reports fork-displaced slots.}
    \label{fig:rq3-chain-quality-breakdown}
\end{figure}

We next examine how \gls{MEV} shifts the learned policy and how this affects chain quality.
\Cref{fig:rq3-chain-quality-breakdown} decomposes chain-quality impact into two sources: sacrificed slots and fork-displaced slots.
The two objectives respond differently to MEV.
For sacrificed slots, pure-griefing policies remain nearly unchanged across tested MEV settings: the dashed lines stay around the same level as $\delta$ and $F$ vary.
This is consistent with the objective of pure griefing, where the agent is rewarded for harming victim performance rather than preserving its own high-value proposal opportunities.
In contrast, self-optimization policies become more conservative when MEV opportunities are more valuable or frequent.
As $F$ and $\delta$ increase, the sacrificed-slot fraction generally decreases, indicating that the agent gives up fewer proposal opportunities when those opportunities are more likely to carry high MEV value.

The pattern for fork-displaced slots is different.
Displacing the victim's block can be attractive under both objectives: it can both increase the adversary's own payoff by recapturing the MEV opportunity in the displaced block and reduce the victim's rewards.
Therefore, the displaced-slot fraction tends to rise as MEV becomes stronger, that is, as $\delta$ or $F$ increases.
This increase is visible under both self-optimization and pure-griefing objectives, suggesting that MEV strengthens the incentive to fork out the victim's blocks regardless of whether the reward emphasizes the adversary's own gains or the victim's losses.

Taken together, these two components suggest that MEV has different chain-quality impacts across objectives.
Because sacrificed slots account for a large share of the chain-quality impact, stronger MEV can make self-optimization policies less harmful to chain quality by discouraging the adversary from giving up its own high-value proposal opportunities.
Under pure griefing, this self-preservation effect is absent, so stronger MEV creates additional fork opportunities that can slightly increase the overall chain-quality impact.

\begin{tcolorbox}[title=Answer to \textbf{RQ-3}, left=2pt, right=2pt, top=0pt,bottom=0pt]
\gls{MEV} makes learned policies value-aware: self-optimization policies sacrifice fewer high-value attacker slots, while both objectives fork-displace more victim blocks. Consequently, stronger \gls{MEV} can reduce chain-quality impact under self-optimization but slightly increase it under pure griefing.
\end{tcolorbox}

\subsection{Application-layer profit estimation}
\label{sec:application-layer-estimation}
Following prior work \cite{ritz2018impact,negy2020selfish,hou2021squirrl,wang2021when,cao2023leveraging,barzur2023deep,barzur2023werlman}, we adopt Monte Carlo-based simulations to estimate the adversary's potential profits via leveraged short positions.
Our goal is not to predict exact real-world profits, but to quantify whether the empirically estimated relation between pool performance and \gls{LST} price movements creates economic incentives.
We further use the simulation to assess how robust these incentives remain under realistic trading frictions and alternative parameter settings.

\paragraphNoSkip{Simulation framework}
Let $\delta^{\mathrm{APR}}$ denote the attack-induced \gls{APR}-equivalent degradation of the target pool.
Let $H$ denote the return horizon, and let $\hat{\beta}_H$ be the regression coefficient at that horizon.
Because the empirical specification in \Cref{section:EmpiricalData} relates pool \gls{APR} to forward log returns, we model the induced log-return shock as
\[
z_H = -\hat{\beta}_H \delta^{\mathrm{APR}} + \varepsilon_H,
\qquad
\varepsilon_H \sim \mathcal N(0,\sigma_H^2),
\]
where $\sigma_H$ is calibrated from the residual variation in the same regression.
We then convert this log-return shock into a normalized \gls{LST} price drop,
$\Delta_p = 1-\exp(z_H)$, which for small deviations gives
$\Delta_p \approx \hat{\beta}_H\delta^{\mathrm{APR}}-\varepsilon_H$.
Following \Cref{sec:application-layer-manipulation}, the adversary constructs a leveraged short position through a lending protocol.
Let $C$ denote the initial collateral (in ETH), $\rho$ the effective \gls{LTV} ratio, and $m$ the number of leverage amplification rounds.
The resulting short exposure is $N = \rho C(1-\rho^{m+1})/(1-\rho)$.

To account for finite market liquidity, we distinguish between the mid-price move implied by $\Delta_p$ and the execution prices at entry and exit.
Let $P_0$ denote the pre-attack \gls{LST} price and $P_1=P_0(1-\Delta_p)$ the post-attack mid price.
If the adversary shorts $q = N/P_0$ units of the \gls{LST}, the realized sell and buy-back prices are $P_{\mathrm{sell}} = P_0(1-s_{\mathrm{sell}}(q))$ and $P_{\mathrm{buy}} = P_1(1+s_{\mathrm{buy}}(q))$, where $s_{\mathrm{sell}}(q)$ and $s_{\mathrm{buy}}(q)$ are proportional slippage functions that depend on trade size and market depth.
The gross trading profit is $\Pi_{\mathrm{trade}} = q(P_{\mathrm{sell}} - P_{\mathrm{buy}})$.

As before, recursive leverage amplifies both upside and downside.
If adverse price movements trigger the protocol's liquidation threshold, we conservatively assume that the adversary may lose the entire initial collateral $C$, given the high leverage ratio and the competitive nature of \gls{MEV}-driven liquidations.
Formally, liquidation occurs when the normalized price change satisfies $\Delta_p \le \Delta_p^{\mathrm{liq}}$, where $\Delta_p^{\mathrm{liq}} < 0$ denotes the liquidation threshold corresponding to an adverse price movement, as implied by the protocol's maximum \gls{LTV} constraint and the effective leverage of the position.

We further account for borrowing costs under time-varying lending conditions.
Let $r_t$ denote the instantaneous borrowing rate of the borrowed \gls{LST} during the holding period.
For a position of size $N$ held over a duration $T$, we model the borrowing cost as $C_{\mathrm{borrow}} = N \int_0^T r_t \, dt \approx N \bar r T$, where $\bar r$ is the average realized borrowing rate over the holding window.
This formulation allows the borrowing cost to increase when utilization rises or when the short position remains open for longer.
We ignore transaction fees in the analysis, as they are second-order compared to the potential profits and losses induced by leveraged positions.
Consistent with this assumption, gas costs during our evaluation period are typically small (e.g., below \$0.2 per transaction as of April 2026~\cite{etherscan2026gastracker}). We also ignore costs at the consensus layer, as our previous evaluations show that they are close to zero in many cases.

Accordingly, the final net profit (or total loss) $\Pi$ is defined as
\[
\Pi =
\begin{cases}
- C, & \text{if liquidation occurs}, \\
\Pi_{\mathrm{trade}} - C_{\mathrm{borrow}}, & \text{otherwise}.
\end{cases}
\]

\paragraphNoSkip{Parameter instantiation and profit estimation}
Given an attack scenario characterized by $(\delta^{\mathrm{APR}}, H, \hat{\beta}_H, \sigma_H, C, \rho, m, T)$ together with a borrowing regime and a market-depth regime, we estimate the profit distribution via Monte Carlo simulation.
In each trial, we (i) sample the market noise $\varepsilon$ and compute the normalized price change $\Delta_p$, (ii) determine whether liquidation occurs by checking $\Delta_p \le \Delta_p^{\mathrm{liq}}$, (iii) compute the realized execution prices under the corresponding slippage assumptions, and (iv) compute the net profit $\Pi$ using the piecewise definition above.
We repeat this process for $10{,}000$ independent trials and report the estimated expected profit $\widehat{\mathbb{E}}[\Pi]$ and the probability of a profitable outcome $\widehat{\Pr}(\Pi>0)$.

We instantiate $\rho = 93\%$ and $\Delta_{p}^{\mathrm{liq}}=-2.15\%$ using real-world market data from the \href{https://app.aave.com/reserve-overview/?underlyingAsset=0xc02aaa39b223fe8d0a0e5c4f27ead9083c756cc2&marketName=proto_mainnet_v3}{Aave protocol}.
We set $m=6$ to model a reasonable leverage level, consider holding windows from intraday to multi-day horizons, and use borrowing costs from observed protocol rates, with $C=100$ ETH as a representative capital allocation.
For the borrowing cost, we set the annualized borrowing rate to $\bar r = 2.17\%$ according to the Aave protocol.

We capture external execution frictions through three reduced-form slippage regimes with bounds of $0.05\%$, $0.1\%$, and $0.2\%$. These values are conservative relative to slippage settings discussed for deep-liquidity swaps in Uniswap's user guidance~\cite{uniswap2026api}. If market depth is worse, the adversary can abstain from opening the position, so larger slippage mainly removes the opportunity rather than imposing an additional realized trading loss.

We set $\delta^{\mathrm{APR}} = 4\%$ as a calibrated APR-equivalent degradation.
This is based on our consensus-layer attack with $\alpha_{\mathcal{A}}=25\%$, comparable to a large staking pool such as Lido.
For victims with stake share at most $10\%$, the attack reduces realized proposer-slot allocation by more than $16\%$ (see \Cref{fig:profit-loss-heatmap}). Since block rewards are only part of total staking APR, this corresponds to an approximately $4\%$ APR-equivalent performance decline (see \Cref{sec:apr-equivalent-calibration}).

We use the full-sample 60-day estimates from \Cref{tab:lst-price-performance-60-day} as our main calibration of $(\hat{\beta}_H,\sigma_H)$.
Compared with longer horizons, this choice better matches the holding period considered in our shorting simulation, while still providing a stable estimate of the performance-price relation.
We report alternative calibrations based on other horizons and larger slippage bounds as robustness checks in \Cref{sec:robustness-simulation}.

We select Coinbase, Ether.fi, and Rocket Pool, whose stake shares span medium ($\approx 5\%$) to small ($\approx 0.5\%$), as representative examples to study profits across their \glspl{LST}.
\Cref{fig:profit-estimation-ecdf} plots the \gls{ECDF} of net profit from shorting the three \glspl{LST} under the main calibration, with a slippage bound of $0.05\%$.
For each curve, the \gls{ECDF} value at zero gives the probability of a non-profitable outcome.
Notably, Coinbase and Ether.fi exhibit broader profit distributions with large positive means, while Rocket Pool remains profitable in slightly more than half of the trials.

\begin{figure}
    \centering
    \includegraphics[width=0.65\linewidth]{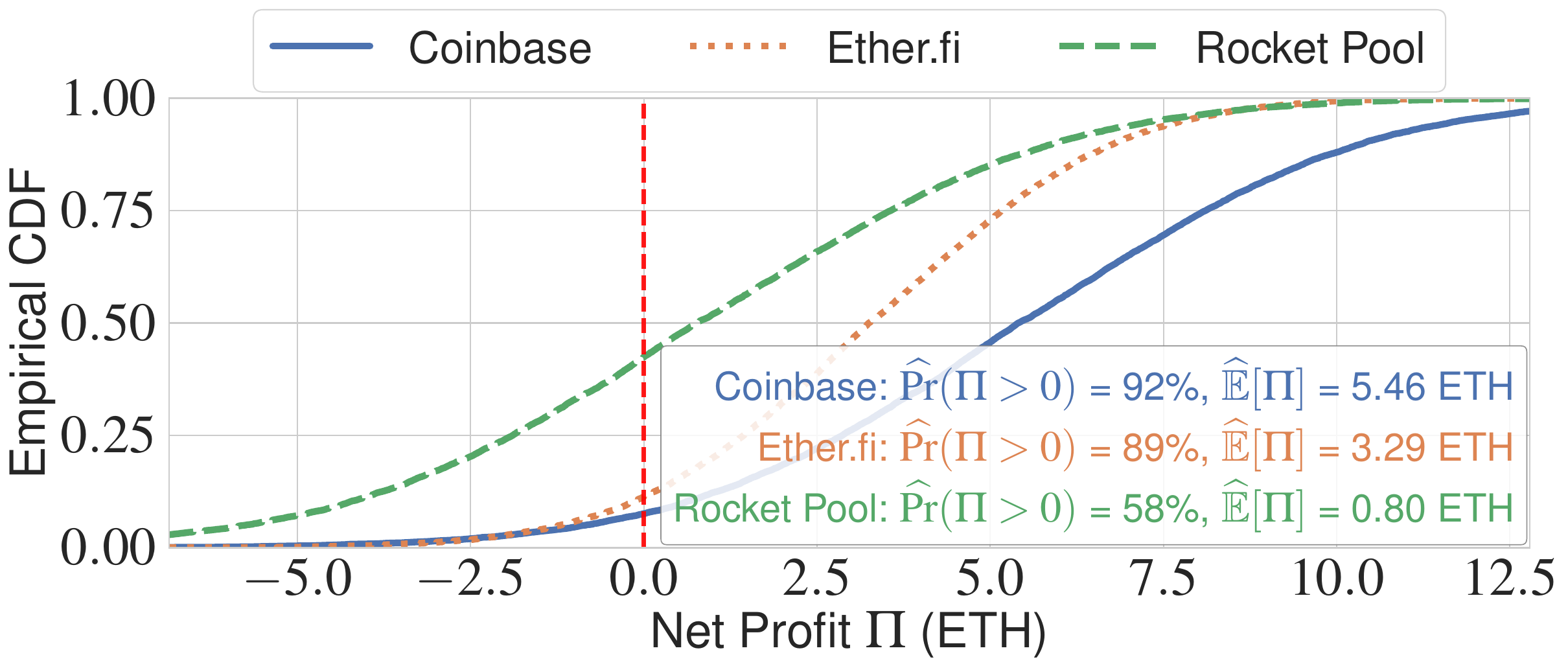}
    \captionDescription{\glspl{ECDF} of net profit from application-layer attacks on three \glspl{LST}. For a given value $x$, the ECDF reports the fraction of Monte Carlo trials with $\Pi \le x$.}
    \label{fig:profit-estimation-ecdf}
\end{figure}

Across the three \glspl{LST}, the estimated probability of profit ranges from $58\%$ to $92\%$, and the expected net profit ranges from $0.80$ ETH to $5.46$ ETH.
For an adversary allocating $C=100$ ETH, these profits are large relative to honest staking: the same capital would earn roughly $0.49$ ETH over 60 days at a $3\%$ APR.
Thus, even under a conservative slippage bound, the application-layer profit channel can dominate honest staking rewards.

\paragraphNoSkip{Break-even degradation threshold}
To compare pools' economic exposure, we compute a break-even degradation threshold $\delta^{\mathrm{BE}}_i$ for each pool $i$.
We vary the APR-equivalent performance degradation $\delta^{\mathrm{APR}}$ and define $\delta^{\mathrm{BE}}_i$ as the smallest value for which the adversary's expected profit against pool $i$ exceeds the honest staking reward over 60 days.
A lower $\delta^{\mathrm{BE}}_i$ indicates higher exposure, since a smaller consensus-layer performance loss is sufficient to make the attack economically preferable to honest participation.

Using our empirical calibration, we estimate $\delta^{\mathrm{BE}}_i$ for each pool.
Coinbase, Ether.fi, Rocket Pool, and Binance have low thresholds: $1.39\%$, $1.94\%$, $3.59\%$, and $5.35\%$, respectively, all below the APR-equivalent degradation achieved by our learned strategy.
Mantle and Lido have higher thresholds, suggesting lower exposure: $22.06\%$ and $33.38\%$, respectively.

\section{Discussion}
\label{section:Discussion}

\ifcomments

\begin{restatable}[]{theorem}{resLSTUpperBound}
    \label{resLSTUpperBound}
    The cost $\cost$ of $1\lst$ cannot exceed $1\token$.
\end{restatable}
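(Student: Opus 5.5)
The natural route is a no-arbitrage argument. We assume a market in which the pool always accepts deposits: any agent can deposit $1\token$ into the liquid staking pool and immediately receive $1\lst$ (the fixed-balance accounting convention the paper adopts for non-rebasing \glspl{LST}, normalized at issuance). We also assume agents are rational and the secondary market for $\lst$ is competitive, so no riskless profit opportunity persists.

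The key steps, in order:

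\begin{enumerate}
\item \emph{Primary-market route.} Formalize that minting is always available at exchange rate $1\token \mapsto 1\lst$. This gives every buyer an outside option for acquiring $1\lst$ at total cost exactly $1\token$, ignoring fees, which the model treats as second order as in \cref{sec:application-layer-estimation}.
\item \emph{Demand side.} Suppose, for contradiction, that $\cost > 1\token$. No rational buyer would purchase $\lst$ on the secondary market, since minting is strictly cheaper.
\item \emph{Supply side.} Any agent can mint $1\lst$ for $1\token$ and sell it for $\cost$, earning $\cost - 1 > 0$ riskless. Arbitrageurs would repeat this, increasing supply until the price falls to at most $1\token$.
\item \emph{Conclusion.} Hence in equilibrium $\cost \le 1\token$.
\end{enumerate}

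The main obstacle is pinning down the assumptions the statement silently relies on. These include: deposits are never paused or capped; the mint rate is exactly $1{:}1$ rather than share-based with accrued rewards (under price appreciation the mint rate drifts, so the statement must be read with \cost normalized to the current redemption value); and arbitrage is frictionless. I would state these explicitly as model assumptions before the argument. If deposits can be capped, I would weaken the claim to hold whenever minting is open.
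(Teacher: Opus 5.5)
Your argument is correct and is the same as the paper's proof: assume $c > 1$, mint one LST by depositing one unit of the underlying token, sell it for more than one token, and keep the difference as riskless profit, which contradicts no-arbitrage. The paper's proof has neither your demand-side step nor your list of explicit assumptions (open deposits, a $1{:}1$ mint rate, frictionless trading). Your caveat about the mint rate is worth keeping: the paper's ``by definition'' $1{:}1$ minting fits rebasing tokens, while elsewhere the paper uses ``LST'' to mean non-rebasing tokens.
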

\begin{proof}
    Assume in contradiction that $\cost > 1$.
    By definition, one can deposit $1\token$ in the staking pool and receive $1\lst$.
    As $\cost > 1$, the single \gls{LST} can be sold for more than $1\token$, say $1+\epsilon\token$.
    Thus, upon selling the \gls{LST}, one can deposit $1\token$ in the staking contract and receive $1\lst$, while pocketing $\epsilon\token$ as riskless profit, breaking the no-arbitrage assumption.
\end{proof}

On the other hand, the cost \emph{can} drop below $1\token$, e.g., consider that slashing of the pool's stake implies that the ability of the pool to redeem the \gls{LST} at a 1-to-1 rate is harmed.

We proceed by showing that it is impossible to design a randomness beacon which is robust to this form of attack.
\begin{restatable}[]{theorem}{resBeaconAttackImpossibility}
    \label{res:BeaconAttackImpossibility}
    A randomness beacon cannot simultaneously prevent the attack and guarantee electing a leader for each slot.
\end{restatable}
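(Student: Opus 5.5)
The plan is to formalize ``the attack'' as the ability of a participant who controls the last contribution to the beacon before the seed for epoch $e+1$ is fixed to choose between at least two distinct outcomes, and then argue that any beacon electing a leader in every slot must leave such a choice to someone. Concretely, I would model a beacon as a function mapping the sequence of contributions (or the absence of a contribution) made by the elected leaders of epoch $e$ to the proposer schedule of epoch $e+1$. Preventing the attack means that no single elected leader can change the schedule of epoch $e+1$ by its choice. Guaranteeing a leader for each slot means the beacon must produce a schedule even when some leaders miss or withhold, i.e., the action $a_{\text{miss}}$ must still yield a well-defined output.

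The first step is a last-revealer argument in the style of Dworzanski~\cite{dworzanski2018note}. I would fix the final slot $s$ of epoch $e$ whose leader $v$ acts before the schedule is determined. Since $v$ may either propose (action $a_{\text{prop}}$) or skip (action $a_{\text{miss}}$), liveness forces the beacon to output a schedule in both cases; call these $\sigma_{\text{prop}}$ and $\sigma_{\text{miss}}$.

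The second step splits into two cases. If $\sigma_{\text{prop}} \neq \sigma_{\text{miss}}$ with positive probability, then $v$, knowing both outcomes at decision time, selects whichever gives the victim pool $\mathcal{T}$ fewer slots. This is exactly the biasing exploited by our attack. If instead $\sigma_{\text{prop}} = \sigma_{\text{miss}}$ always, then $v$'s contribution is irrelevant. In that case I would move induction backward to the previous slot, where the same dichotomy applies.

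The third step finishes the induction. If every contribution in epoch $e$ is irrelevant, then the schedule of epoch $e+1$ is a fixed function of information available before epoch $e$. I would then apply the same argument one epoch earlier, and eventually reach a schedule that depends on no contribution at all. Such a schedule is a deterministic, publicly predictable assignment. Predictability alone lets the adversary time its forks (the actions $a_{\text{hide}}$ and $a_{\text{fork}}$) against known victim slots, and a fixed schedule cannot track stake, so it is not a meaningful leader election either. Either way the attack is not prevented.

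The main obstacle is making ``prevent the attack'' precise enough that the degenerate branch counts as a failure. One option is to require the beacon to be unpredictable; another is to note that a constant schedule cannot keep electing stake-proportional leaders once stake changes. A second obstacle is beacons with delays such as VDFs, where $v$ cannot compute both outcomes before choosing. I would handle these by requiring the attack only to bias the distribution of outcomes, so that a blind choice between two different distributions still suffices, or else by restricting the statement to beacons whose output is computable within the slot.
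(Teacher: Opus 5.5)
There is a genuine gap, and it is exactly the ``main obstacle'' you flag. If preventing the attack means ``no elected leader can change the schedule by its choice,'' then the two properties in the statement are jointly satisfiable. Take a schedule that depends on no contribution at all, say a stake-weighted round-robin computed from the public stake distribution; this also refutes your claim that a fixed schedule cannot track stake. It gives no one any influence, and it still assigns a leader to every slot. So your backward induction never contradicts ``a leader for each slot''. It contradicts only an extra unpredictability requirement that the statement does not contain. The liveness property does no real work in your argument beyond making $\sigma_{\text{miss}}$ well defined, even though the theorem presents it as one horn of the trade-off. Appealing to predictable schedules enabling forks switches to a different attack. Your VDF patch also fails. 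If $v$ cannot evaluate the two outcomes before choosing, then, conditioned on $v$'s information, both choices induce the same distribution over the victim's allocation, so a blind choice produces no bias. In your model a delayed beacon would be a counterexample, consistent with the paper listing VDF-delayed outputs as a mitigation direction.

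The missing idea is the one the paper's proof rests on: conservation of election probability combined with the mechanism's inability to tell who owns which account. The paper does not read ``preventing the attack'' as ``no one can influence the beacon''. It reads it as follows: after account $a$ manipulates, the mechanism may lower $a$'s election probability by some $\Delta > 0$, but must not lower anyone else's and must not raise any adversary-controlled account's. The adversary holds at least two accounts. The removed mass $\Delta$ then has only two possible fates. It can go to some $i \ne a$; but an execution where $i$ is honest is indistinguishable to the mechanism from one where $i$ is the attacker's second account, so this reassignment benefits the attacker. Otherwise it goes nowhere, every $p_i$ with $i \ne a$ is unchanged, and $\sum_i p_i(s') = 1 - \Delta < 1$, so with positive probability some slot elects no leader. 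This two-case argument uses the leader-in-every-slot property essentially, as the constraint $\sum_i p_i = 1$. It needs no last-revealer induction, no unpredictability assumption, and no assumption about what the manipulator can compute in time.
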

\begin{proof}
    Assume towards contradiction that such a randomness beacon exists.
    Solving the biasing attack entails preventing an attacker from decreasing others' election chances.
    This implies that solving the attack can only allow increasing others' chances, and that this must be only at the expense of the attacker.
    So, consider an attacker staking strictly positive amounts using two different accounts, and performing the attack using just one of the two.
    By assumption, the attack cannot harm the remaining attacker account, nor can it harm any staker besides the account that was used to carry out the attack.
    We split to cases.
    If the attack benefits the remaining attacker account, we reach a contradiction to the assumption.
    If it does not, then we have two possibilities.
    If the attack benefits some honest account, we can consider the scenario where that account actually belonged to the attacker, again reaching a contradiction.
    Thus, it must be that the others' election chances are completely unaffected by the attack.
    Considering the election probabilities of all stakers, the lower probability for the attack account implies that the sum is strictly lower than $100\%$.
    That is, the beacon's future schedule has a strictly positive probability of not electing any leader for some blocks.
\end{proof}

An interesting open question is to explore the connection of our attacks to the economic notion of so-called ``signal-jamming'' predation by \cite{fudenberg1986signal}, where a predatory market actor attempts to drive an entrant out of the market by subtly interfering with the entrant's profits.
\fi

\subsection{Mitigation}

Note that the cross-layer attacks we study represent a general vulnerability rather than a single concrete attack.
Therefore, in our mitigation analysis, we focus on defenses against different types of consensus-layer manipulation.

\paragraphNoSkip{Impossibility of preventing leader election manipulation}
We first argue that randomness beacons cannot fully eliminate leader-election manipulation without sacrificing some reasonable properties.
Before stating the formal impossibility result in \cref{res:BeaconAttackImpossibility}, we give the intuition.
The attack works by letting the adversary intentionally reduce the election chance of one account.
Any mechanism that tries to repair this effect must either reassign the lost probability mass or leave it unassigned.
Reassignment is unsafe because the beacon cannot distinguish honest accounts from additional attacker-controlled accounts.
Leaving it unassigned means that, with some probability, some slots will not have an elected leader.
This tension yields the result below.

\begin{restatable}[]{theorem}{resBeaconAttackImpossibility}
\label{res:BeaconAttackImpossibility}
Consider accounts $\left\{ 1, \dots, n \right\}$ participating in a leader election mechanism, where at least two are adversary-controlled.
Denote the probability assigned by the mechanism for account $i$ to be elected in slot $s$ by $p_i(s)$.
Suppose that after adversary-controlled account $a \in \left[n\right]$ performs a manipulation in slot $s$, the mechanism reduces $a$'s election probability for some subsequent slot $s'$ by $\Delta > 0$ without lowering the election probability of any other account: $p_a(s') = p_a(s) - \Delta < p_a(s)$ and $\forall i \ne a: p_i(s') \ge p_i(s)$.
Then, without knowing the identities of all accounts' owners, the mechanism cannot simultaneously satisfy the following properties:
(i) no adversary-controlled account has higher election probability in slot $s'$,
(ii) a leader is guaranteed to be elected in every slot: $\forall s: \sum_{i\in\left[n\right]} p_i(s) = 1$.
\end{restatable}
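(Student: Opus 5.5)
We argue by contradiction: suppose a mechanism satisfies both (i) and (ii) while being ignorant of who owns which account. The plan is to show that (ii) forces the probability mass $\Delta$ removed from account $a$ to be handed to some other account. We then exploit the mechanism's ignorance of ownership so that this recipient can be taken to be adversary-controlled, which contradicts (i).

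\textbf{Step 1 (conservation of mass).} Applying (ii) to both slots $s$ and $s'$ gives $\sum_i p_i(s') = \sum_i p_i(s) = 1$. Substituting $p_a(s') = p_a(s) - \Delta$ yields
\[
\sum_{i \ne a} \bigl(p_i(s') - p_i(s)\bigr) = \Delta > 0 .
\]
Every summand is nonnegative by hypothesis, so there exists some $j \ne a$ with $p_j(s') > p_j(s)$. In words, some other account strictly gains election probability.

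\textbf{Step 2 (ownership indistinguishability).} The set of accounts $J = \{ j \ne a : p_j(s') > p_j(s)\}$ is determined by the mechanism's outputs. The mechanism does not know the identities of the accounts' owners, so its outputs $p_i(\cdot)$ are the same functions of the observable history whether or not a given account is adversarial. We therefore distinguish two cases.
\begin{itemize}
\item If some account in $J$ is already one of the other adversary-controlled accounts (at least two exist by assumption), (i) is violated immediately.
\item Otherwise, consider the indistinguishable scenario in which the second adversarial account $b$ is the account $j \in J$ itself. Equivalently, the adversary chooses which of its accounts plays the role of $j$. The observable history is identical, so the mechanism produces the same probabilities, and adversary-controlled $j$ now has a strictly higher election probability at $s'$, again violating (i).
\end{itemize}
In both cases (i) fails, so (i) and (ii) cannot hold together.

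\textbf{Main obstacle.} The delicate point is Step 2: formalizing ``without knowing the identities'' as invariance of the mechanism's output under relabeling of account ownership. I would state this explicitly as the assumption that $p_i(\cdot)$ depends only on public history and stakes, and not on the ownership partition. Under that assumption, the swap argument is legitimate even though the mechanism may deterministically pick a particular recipient $j$, because the adversary may place its second account there.

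Finally, we note the contrapositive, which gives the intuition stated before the theorem. If (i) is enforced, then all $p_i(s') = p_i(s)$ for $i \ne a$, and the total mass is $1 - \Delta < 1$, so some slot has no elected leader.
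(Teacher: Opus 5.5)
Your proof is correct and follows the paper's argument. The paper splits on whether some $i \ne a$ has $p_i(s') > p_i(s)$: it handles that case with the same ownership-swap indistinguishability argument you give, and the other case by summing probabilities to contradict (ii). You instead use (ii) up front to force the existence of such a gaining account, which merges the two cases into one linear argument, and your computation $\sum_{i \ne a}(p_i(s') - p_i(s)) = \Delta$ is stated more carefully than the paper's corresponding display.
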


This suggests that robust defenses may require a combination of beacon design choices and external mechanisms, rather than standalone modifications of the randomness beacon.
Promising directions include delaying the availability of beacon outputs until after all relevant commitments have been fixed (e.g., via \glspl{VDF}~\cite{boneh2018verifiable}), and introducing stronger links between accounts and the entities that control them.

\paragraphNoSkip{Preventing other attacks}
For insider attacks, one mitigation idea is to require collateral, which can change the adversary's payoff structure.
If a node operator must lock up collateral that can be slashed when it misses proposals or attestations, then deliberately skipping slots or withholding attestations becomes more costly.
Thus, collateral can deter opportunistic insiders by making performance degradation economically unattractive.
In practice, Rocket Pool has adopted a similar idea to amplify the penalty for misbehaving node operators~\cite{rocketpool2026collateral}.

With respect to strategic forking manipulation, we observe that most existing attacks stem from specific flaws in prior protocol designs and have been addressed by subsequent protocol upgrades. For this class of attacks, the most effective mitigation is therefore either (i) for liquid staking pool operators or regular validators to deploy the latest patched protocol or (ii) for protocol designers and the community to transition to a more secure consensus mechanism~\cite{sarenche2025commitment,zhang2025available}.
The same mitigation idea also applies to network-layer or \gls{DoS} attacks if these attacks are caused or amplified by incorrect protocol design or vulnerable software implementations. We note that eclipse attacks and mempool \gls{DoS} attacks have been addressed through software upgrades~\cite{ding2024asymmetric}.
For regular \gls{DoS} attacks, standard network-layer mitigation is typically available from cloud services~\cite{cloudflare2026ddosprotection}.

\subsection{Scope and future work}

\paragraphNoSkip{\gls{DRL} approach}
As shown in \cref{sec:DRL-effectiveness}, there is a residual gap between the learned policies and the theoretical optimum in the cases with high adversary stakes. This suggests that further improvements can be made to close this gap in future work.
Moreover, the current framework models a single strategic agent on Ethereum \gls{PoS}.
This abstraction is standard in prior work \cite{eyal2014majority,sapirshtein2017optimal,gervais2016security,yaish2023uncle,barzur2023werlman,bar2025mad}; modeling multiple interacting adversaries remains largely unexplored and is a natural direction for future work.

\paragraphNoSkip{Scope of economic modeling}
Because fully modeling each \gls{LST}'s market microstructure is difficult, we use a reduced-form profit model calibrated from real historical data.
Profitability can still depend on execution conditions and parameters, especially slippage, borrowing costs, holding window $T$, and the estimated price-response parameters.
We therefore vary these parameters in robustness checks rather than relying on a single favorable setting.
Across these checks, the cross-layer channel can still create incentives larger than ordinary staking rewards in many settings.

\section{Conclusions}
In this paper, we propose a cross-layer attack: consensus manipulations can create targeted pool performance shocks which can be monetized via \gls{LST} markets.
We make the attack concrete by establishing an association between performance and \gls{LST} prices, and showing how a low-stakes adversary can manipulate the \gls{PoS} leader election mechanism to reduce the victim's realized block share and then use leveraged shorting to profit from the associated price drop.
Our evaluations show that our \gls{DRL} framework discovers strategies that significantly degrade victim performance, e.g., a 20\% adversary can reduce the realized block allocation of another 20\% pool by about 8.3\%.
Furthermore, Monte Carlo simulations reveal that such attacks are economically attractive, with a probability of profitability exceeding one-half for major staking pools and, in some cases, returns of roughly $11\times$ the honest staking \gls{APR}.
Finally, we discuss possible directions for mitigating the risks of such cross-layer attacks.

\printbibliography[heading=bibintoc]

\appendix

\section{Staking Pool Distribution and Trends}
\label{sec:staking-pool-distribution}
In this section, we provide more details about the liquid staking pool distribution. All data are sourced from an online dashboard~\cite{dune2026eth2staking}.
\Cref{figure:StakeOverTime} shows how the distribution of stake in Ethereum has evolved over time across different categories of stakers.
Liquid staking-related categories, including Liquid staking, Liquid restaking, and Staking pools, account for a large and relatively stable fraction of total stake, exceeding $50\%$ most of the time.
Note that this represents a lower bound, as some stakers labeled as ``CEXs,'' such as Binance, also provide liquid staking services and issue their \glspl{LST}~\cite{binance2026ethstaking}.
More generally, each staker is assigned a single primary label in the dashboard; this labeling scheme does not capture the full spectrum of services provided by a staker, but instead reflects its most salient category.

\begin{figure}
    \centering
    \scalebox{1}{\begin{tikzpicture}
    \begin{axis}[
        smooth,
        stack plots=y,
        area style,
        date coordinates in=x,
        xticklabel={\pgfcalendar{tickcal}{\tick}{\tick}{\pgfcalendarshorthand{m}{.}} '\shortyear{\year}},
        xtick distance=150,
        xticklabel style={rotate=90, anchor=near xticklabel},
        xmin=2022-09-12,
        xmax=2026-02-02,
        ylabel={Stake},
        legend style={at={(1.65,0.75)}, fill=white, fill opacity=1},
    ]
        \addplot [c0!60!black, fill=c0!40, opacity=0.5] table [col sep=comma, x=date, y=LiquidStaking] {figures/StakeOverTime.csv} \closedcycle;
        
        \addplot [c10!60!black, fill=c10!40, opacity=0.5] table [col sep=comma, x=date, y=LiquidRestaking] {figures/StakeOverTime.csv} \closedcycle;
        
        \addplot table [col sep=comma, x=date, y=StakingPools] {figures/StakeOverTime.csv} \closedcycle;
        
        \addplot table [col sep=comma, x=date, y=CEXs] {figures/StakeOverTime.csv} \closedcycle;
        
        \addplot [c6!60!black, fill=c6, opacity=0.5, postaction={pattern=north west lines}] table [col sep=comma, x=date, y=SoloStakers] {figures/StakeOverTime.csv} \closedcycle;
        
        \addplot [c4!60!black, fill=c4, opacity=0.5, postaction={pattern=north east lines}] table [col sep=comma, x=date, y=Unidentified] {figures/StakeOverTime.csv} \closedcycle;

        \legend{
            Liquid staking,
            Liquid restaking,
            Staking pools,
            CEXs,
            Solo stakers,
            Unidentified,
        }
    \end{axis}
\end{tikzpicture}}
    \captionDescription{
        Ethereum's stake distribution over time, according to staker category.
        Ethereum launched a preliminary version of its \gls{PoS} protocol in 2020, running it alongside its \gls{PoW} protocol until officially transitioning to \gls{PoS} on September 15th, 2022~\cite{ethereum2024beacon}.
    }
    \label{figure:StakeOverTime}
\end{figure}

\begin{table}
    \centering
    \small
    \captionDescription{The top 50 stakers currently operating on Ethereum.}
    \label{table:StakeDistribution}
    \scalebox{0.85}{\begin{filecontents*}{figures/StakeDistribution.csv}
staker,category,stake,share
Lido,Liquid Staking,8680063,23.3992
Binance,CEXs,3409104,9.1901
ether.fi,Liquid Restaking,2390809,6.445
Coinbase,CEXs,1795077,4.8391
Figment,Staking Pools,1471104,3.9657
Kraken,CEXs,1401410,3.7778
Blockdaemon,Staking Pools,989696,2.668
Everstake,Staking Pools,693280,1.8689
Kiln,Staking Pools,621410,1.6752
Rocket Pool,Liquid Staking,558621,1.5059
Bitcoin Suisse,CEXs,432856,1.1669
OKX,CEXs,419198,1.13
Upbit,CEXs,408352,1.1008
StakeWise,Staking Pools,357405,0.9635
P2P.org,Staking Pools,325710,0.878
stakefish,Staking Pools,289738,0.7811
Liquid Collective,Liquid Staking,281888,0.7599
DARMA Capital,Staking Pools,239424,0.6454
Renzo,Liquid Restaking,193281,0.521
Bitstamp,CEXs,188064,0.507
Mantle,Staking Pools,186080,0.5016
Kelp DAO,Liquid Restaking,164128,0.4425
Stader,Liquid Staking,139204,0.3753
CoinSpot,CEXs,137664,0.3711
Diva (Pre-launch),Liquid Staking,133484,0.3598
Consensys,Staking Pools,103616,0.2793
Octant,Staking Pools,100352,0.2705
Revolut,CEXs,60864,0.1641
Frax Finance,Liquid Staking,59520,0.1605
imToken,Staking Pools,55712,0.1502
Gate.io,CEXs,54816,0.1478
Bitfinex,CEXs,52099,0.1404
Cluster 0x82df,Unidentified,49376,0.1331
XHash,CEXs,41376,0.1115
StakeHound,Liquid Staking,37504,0.1011
MyEtherWallet,Staking Pools,35395,0.0954
Puffer Finance,Liquid Restaking,30341,0.0818
Bitget,CEXs,26944,0.0726
GMO Coin,CEXs,21280,0.0574
KuCoin,CEXs,20000,0.0539
Swell,Liquid Restaking,19201,0.0518
RockX,Staking Pools,18913,0.051
Mercado Bitcoin,CEXs,17888,0.0482
WEX Exchange,CEXs,16000,0.0431
Origin Protocol,Staking Pools,15520,0.0418
Cluster 0xc372,Unidentified,11968,0.0323
CoinEx,CEXs,9952,0.0268
Stkr (Ankr),Liquid Staking,9228,0.0249
Swell (Pre-launch),Liquid Staking,7723,0.0208
was.eth,Solo Stakers,7690,0.0207
\end{filecontents*}
\csvreader[
        tabular = r c c l,
        table head = \toprule \bfseries{Staker} & \bfseries{Stake (ETH)} & \bfseries{Share (\%)} & \bfseries{Category} \\ \midrule,
        late after last line = \\\bottomrule,
        head to column names,
        respect percent
    ]{figures/StakeDistribution.csv}{}
    {\staker & \stake & \share & \category}
    }
\end{table}

\Cref{table:StakeDistribution} summarizes Ethereum's top 50 stakers as of February 2026, ranked by stake.
At the top of the list, Lido accounts for close to $25\%$ of the total stake and representing the largest ETH staking provider.
Following are Binance, Ether.fi, and Coinbase, each with approximately $5\%$ to $10\%$ of the total stake.
The next tiers consist of stakers holding $1\%$ to $5\%$ and $0.5\%$ to $1\%$, representing relatively small-sized ETH stakers.
The remaining stakers fall into the tail of the stake distribution, each holding less than $0.5\%$ of the total stake.

\section{Robustness of the Price-Performance Analysis}
\label[appendix]{sec:robustness-of-price-performance-analysis}

To examine whether the relation between pool performance and \gls{LST} valuation depends on the return horizon, we repeat the same specification using 90-day and 60-day forward normalized returns.
All other definitions and estimation settings follow the main analysis.

\Cref{tab:lst-price-performance-90-day,tab:lst-price-performance-60-day} report the corresponding results.
The positive relation is relatively robust across horizons: most estimated coefficients remain positive, and Binance, Coinbase, Ether.fi, and Rocket Pool remain statistically significant under both shorter horizons.
This supports the main finding that pool performance and \gls{LST} valuation are meaningfully related, while suggesting that the relation becomes clearer over longer horizons.

\begin{table}
\centering
\captionDescription{Correlation and regression results for the relationship between pool performance and 90-day forward \gls{LST} returns.}
\small
\begin{tabular}{lccccc}
\toprule
Pool & Pearson & Spearman & $\beta$ & HAC SE & $p$-value \\
\midrule
Lido           & 0.251 & 0.181 & 0.0636 & 0.0185 & 0.0006  \\
Binance        & 0.319 & 0.365 & 0.1097 & 0.0215 & <0.0001  \\
Coinbase       & 0.662 & 0.602 & 0.5472 & 0.0731 & <0.0001 \\
Ether.fi       & 0.228 & 0.233 & 0.3877 & 0.1145 & 0.0007 \\
Rocket Pool     & 0.179 & 0.103 & 0.1296 & 0.0507 & 0.0106  \\
Mantle & 0.017 & 0.089 & 0.0155 & 0.0295 & 0.5990 \\
\bottomrule
\end{tabular}
\label{tab:lst-price-performance-90-day}
\end{table}

\begin{table}
\centering
\captionDescription{Correlation and regression results for the relationship between pool performance and 60-day forward \gls{LST} returns.}
\small
\begin{tabular}{lccccc}
\toprule
Pool & Pearson & Spearman & $\beta$ & HAC SE & $p$-value \\
\midrule
Lido           & 0.053 & 0.092 & 0.0149 & 0.0210 & 0.4788  \\
Binance        & 0.295 & 0.336 & 0.0932 & 0.0216 & <0.0001  \\
Coinbase       & 0.533 & 0.452 & 0.3587 & 0.0503 & <0.0001 \\
Ether.fi       & 0.170 & 0.204 & 0.2564 & 0.0989 & 0.0095\\
Rocket Pool     & 0.192 & 0.060 & 0.1389 & 0.0542 & 0.0103 \\
Mantle & 0.029 & 0.137 & 0.0225 & 0.0280 & 0.4208 \\
\bottomrule
\end{tabular}
\label{tab:lst-price-performance-60-day}
\end{table}

\section{Glossary}
\label[appendix]{section:Glossary}
Following is a list of the acronyms used in the paper.
\setglossarystyle{alttree}\glssetwidest{AAAA}
\printnoidxglossary[type={acronym},title={}]

\section{Other Consensus-layer Manipulations}
\label{sec:other-consensus-layer-attacks}
In this section, we provide further details on other consensus-layer manipulation approaches that are available to low-stakes adversaries.

\paragraphNoSkip{Insider attacks}
In practice, pools may allow external participants to join as node operators rather than requiring all operators to be internal to the pool~\cite{rocketpool2026collateral}.
Instead of manipulating consensus rules, the adversary can participate in the target pool as an operator and strategically miss slots or withhold attestations to degrade the pool's performance~\cite{rosenfeld2011analysis,tzinas2024principal}. 
We refer to this class of attacks as \textit{insider attacks}. 
Notably, such attacks require only standard participation privileges and do not rely on access to internal protocol implementations.

\paragraphNoSkip{Network-layer attacks}
Another way to cause the target pool's validators to miss block proposals or attestations is to disrupt their network connectivity during their assigned slots.
This requires the adversary to first identify the target's network endpoints, e.g., using the methods of~\citet{heimbach2024deanonymizing}.
Accordingly, the adversary can launch network-layer attacks to achieve this goal.
For example, eclipse attacks~\cite{heilman2015eclipse,wust2016ethereum} isolate targeted validators by controlling their network connections, causing them to observe delayed or incomplete blockchain states.
Another possibility is to launch \gls{DoS} attacks, which render the validator unavailable during its assigned slots and prevent it from performing as expected.
Examples include Mempool \gls{DoS} attacks~\cite{li2021deter,yaish2024speculative,wang2024understanding}, which can disrupt block building and propagation.

\section{Application-layer Monetization Details}
\label{sec:application-layer-monetization-details}

We now provide details for the application-layer channels summarized in \Cref{section:Attacks}.

\subsection{Taxonomy of monetization channels}
Given the established correlation between validator performance and the \gls{LST} price, an adversary can proactively trigger price deviations through consensus-layer attacks.
This capability transforms the \gls{LST} from a passive asset into a manipulatable financial instrument.
Building upon the premise of predictable price depreciation, we systematically explore the DeFi landscape to identify potential monetization channels.

\begin{table*}[htbp]
    \centering
    \captionDescription{Taxonomy of Monetization Channels}
    \label{tab:monetization-taxonomy}
    \small
\scalebox{0.75}{
\begin{tabularx}{1.3\linewidth}{l l l l l l P}
\toprule
\textbf{Channel}  & \textbf{Mechanism} & \textbf{Protocols} & \textbf{Complexity} &  \textbf{Capital}  & \textbf{Time Window} & \textbf{Key Risk}  \\
\midrule
\multirow{3}{*}{Shorting} & Plain lending & Aave, Spark & Low & High  & \multirow{3}{*}{Days to months} & \multirow{3}{=}{Liquidation, and the risk increases with leverage} \\
 & Leveraged lending & Gearbox, Morpho & Medium & Medium & \\
 & Derivatives & dYdX, Binance & Low & Low & ~ \\
\midrule
Prediction  & ~ & Polymarket, Kalshi & Low &  Low & Days to months  & Oracle Manipulation \\
\midrule
Liquidation & ~ & Aave, Spark & High & Low & 1 block (seconds) & MEV Competition  \\
\midrule
\multirow{2}{*}{Arbitrage} & Atomic & \multirow{2}{*}{Uniswap, Curve} & High & Low & 1 block (seconds)
 & \multirow{2}{*}{MEV Competition} \\
 & Non-atomic & ~ & High & High
 & Seconds to minutes & ~ \\
\bottomrule
\end{tabularx}
}
\end{table*}

As summarized in~\Cref{tab:monetization-taxonomy}, the application layer offers a diverse array of monetization channels for an adversary capable of inducing a decrease in \gls{LST} prices.
These channels vary significantly in terms of capital efficiency, risk exposure, and dependence on external market conditions.
We categorize them into four primary classes: \textit{shorting}, \textit{prediction}, \textit{liquidation}, and \textit{arbitrage}, which we analyze in detail below.

\paragraphNoSkip{Shorting} One strategy is to short the \gls{LST} associated with the target liquid staking pool.
We consider three mechanisms for executing short positions.
\begin{itemize}[leftmargin=*]
    \item One can use \textit{lending protocols} \cite{kotzer2026sok}, such as Aave~\cite{aave}.
    The adversary deposits neutral collateral (e.g., ETH) into the protocol to borrow the target \gls{LST} and then the borrowed tokens are subsequently sold on a \gls{DEX} to establish a short position.
    This approach requires upfront capital, since the position must be over-collateralized, and incurs ongoing borrowing interest as an operational cost.
    The primary risk is liquidation\footnote{Liquidation~\cite{qin2021empirical} refers to the protocol mechanism that forcibly closes undercollateralized positions, allowing liquidators to repay outstanding debt and seize collateral at a discount.}, which is triggered when a price rebound causes the debt value to exceed the protocol's liquidation threshold.
    \item \textit{Leveraging protocols} like Gearbox~\cite{gearbox}, the adversary can open ``Credit Accounts'': smart contract containers that allow for higher leverage within whitelisted DeFi protocols.
    This approach reduces the initial capital by utilizing the protocol's credit line; however, it also introduces high liquidation sensitivity, as the increased leverage makes the account highly vulnerable to even minor price fluctuations.
    \item The third mechanism is \textit{perpetual futures}. This approach uses decentralized derivatives platforms (e.g., dYdX~\cite{dydx}) or \glspl{CEX} to bet on price decreases without requiring physical delivery of the \gls{LST}. Instead of borrowing the asset, the adversary provides a collateral deposit, known as \textit{margin}, to back a synthetic short position. This mechanism imposes no strict capital floor, allowing the adversary to selectively determine their leverage ratio.
    While a higher leverage ratio enables significant capital efficiency, it proportionally increases the risk of liquidation.
\end{itemize}

\paragraphNoSkip{Prediction}
Adversaries can further leverage prediction markets~\cite{saguillo2025unravelling}, such as Polymarket~\cite{polymarket}, to monetize their internal knowledge of an impending attack.
By betting on unfavorable outcomes (e.g., a price decrease) that they intentionally induce, they exploit information asymmetry between informed adversaries and uninformed participants.
Unlike short selling, prediction markets offer a relatively low capital barrier, allowing adversaries to place bets without significant collateral requirements.
However, this channel is constrained by market depth (i.e., the availability of counterparties) and settlement risk, since the eventual payout depends on the correctness and potential manipulability of the oracle.

\paragraphNoSkip{Liquidation}
As the \gls{LST} price declines, many existing collateralized positions may hit their liquidation thresholds.
Since the adversary has the privileged knowledge regarding the timing and impact of the attack, they can preemptively prepare liquidation transactions to capture these opportunities the moment the price drops.
In terms of capital requirements, this channel is highly efficient as it can be facilitated via flash loans~\cite{qin2021attacking}.
However, the primary risk is the intense \gls{MEV} competition~\cite{daian2020flash}: while the adversary has a head start, other searchers monitor the mempool for oracle updates.
As a result, the adversary may either lose the liquidation opportunity or have to share most of the potential profit with the block builder to secure inclusion.

\paragraphNoSkip{Arbitrage}
The token price adjustment is often non-simultaneous across \glspl{CEX} and \glspl{DEX}, which creates opportunities for arbitrage, i.e., profiting from cross-market price differences.
The adversary can exploit this information asymmetry to execute {\em atomic arbitrage} across \glspl{DEX}~\cite{zhou2021just,wang2022cyclic}, {\em non-atomic arbitrage} between \glspl{CEX} and \glspl{DEX}, as well as across multiple Layer-2 blockchains~\cite{heimbach2024non,wu2025measuring,oz2025cross}.
Atomic arbitrage requires little capital due to flash loans, whereas other forms require capital on \glspl{CEX} or across chains.
Nevertheless, the primary risk mirrors that of liquidation: the \gls{MEV} competition introduces execution uncertainty and potentially prohibitive costs for the adversary.

\subsection{Shorting LST through lending protocols}
\label{sec:lending-based-shorting}
We describe a canonical mechanism for constructing a short position on an \gls{LST} via DeFi lending protocols, as this strategy is simple and can be readily executed by the adversary.

To establish a short position on an \gls{LST}, the adversary deposits a neutral asset (ETH) as collateral in a lending protocol, borrows \gls{LST}, and immediately sells the borrowed \gls{LST} on the secondary market.
Let $C$ denote the adversary's initial ETH collateral, $\rho \in (0,1)$ the effective \gls{LTV} ratio, and $P_t$ the \gls{LST}/ETH market price at time $t$.
The adversary can therefore borrow \(
Q_t = \frac{\rho C}{P_t}
\) units of \gls{LST}.
By selling the borrowed \gls{LST}, the adversary obtains $\rho C$ units of ETH.
If the \gls{LST}/ETH price subsequently decreases to $P_{t'} < P_t$, the adversary can repurchase the borrowed \gls{LST} at a lower cost of
\[
Q_t \cdot P_{t'} =  \rho C \cdot \frac{P_{t'}}{P_t}.
\]
So the adversary's gross profit from the short position is
\[
\Pi_{\text{gross}} = \rho \cdot C \cdot \Delta_p = \rho C (1 - \frac{P_{t'}}{P_t}).
\]

\paragraphNoSkip{Leveraged short}
We model leverage amplification in discrete rounds $r=0,1,\dots,k$ starting from the time $t$, where each round reuses the ETH proceeds from selling borrowed \gls{LST} as additional collateral.
At round $t$, the adversary borrows
\[
Q_r = \frac{\rho C_r}{P_r}
\]
units of \gls{LST} and immediately sells them for $\rho C_r$ units of ETH.
We assume $P_r = P_t$ within each round, i.e., the borrowing and swap operations are executed within a single short execution window, so price drift is negligible.
This can be justified by the fact that these operations can be done automatically by a script.
The adversary then deposits these ETH proceeds as collateral for the next round, yielding the recursion
\[
C_{r+1} = \rho C_r, \qquad C_0 = C.
\]
After $k$ rounds, the total borrowed (and sold) \gls{LST} amount is
\[
Q^{(k)} = \sum_{r=0}^{k} Q_r
= \frac{\rho C}{P_t}\sum_{r=0}^{k}\rho^{r}
= \frac{\rho C}{P_t}\cdot \frac{1-\rho^{k+1}}{1-\rho}.
\]
If the \gls{LST}/ETH price subsequently decreases from $P_t$ to $P_{t'}<P_{t}$, the adversary repurchases $Q^{(k)}$ units of \gls{LST} to repay the debt, yielding the gross profit
\[
\Pi^{(k)}_{\text{gross}} = Q^{(k)}(P_t-P_{t'})
= \rho C\left(1-\frac{P_{t'}}{P_{t}}\right)\cdot \frac{1-\rho^{k+1}}{1-\rho}.
\]

\section{Additional Details for DRL Evaluation}
\label{sec:ablation-study}

\subsection{Hyperparameter}

Given that the performance of \gls{PPO} is known to be sensitive to hyperparameter configurations, we employed Optuna~\cite{akiba2019optuna}, an automatic hyperparameter optimization framework, to tune the agent.
We used the Tree-structured Parzen Estimator (TPE) sampler to search for an optimal set of hyperparameters.

For RQ1, we tune the hyperparameters using the self-optimization setting with $30\%$ adversarial stake.
For RQ2, we tune them using the pure-griefing setting with a $20\%$ adversarial stake and a $20\%$ victim stake.
The search objective is to maximize the adversary's realized slot allocation for self-optimization policies, and to minimize the victim's realized slot allocation for pure-griefing policies.
For RQ3, we keep the hyperparameters aligned with the corresponding objective: the value-maximizing policies use the RQ1-tuned hyperparameters, while the victim-value-minimizing policies use the RQ2-tuned hyperparameters.
The selected hyperparameters are presented in \Cref{tab:ppo_hyperparameters}.

\begin{table*}[thbp]
\centering
\captionDescription{Optimized Hyperparameters via Optuna.}
\label{tab:ppo_hyperparameters}
\small
\scalebox{0.85}{\begin{tabular}{@{}lccc@{}}
\toprule
\textbf{Parameter} & \textbf{Search Space / Range} & \textbf{Optimized Value (RQ1)} &  \textbf{Optimized Value (RQ2)} \\ \midrule
Learning Rate & $[1 \times 10^{-5}, 1 \times 10^{-3}]$ & $1.47 \times 10^{-5}$ & $3.99 \times 10^{-5}$ \\
$n\_steps$ (Rollout Buffer) & $\{256, 512, 1024, 2048, 4096\}$ & 1024 & 256 \\
Batch Size & $\{32, 64, 128, 256, 512\}$ & 512 & 64 \\
$n\_epochs$ (Optimization) & $\{5, 10, 20\}$ & 10 & 20 \\
GAE Lambda ($\lambda$) & $[0.80, 0.98]$ & 0.929 & 0.964 \\
Clip Range ($\epsilon$) & $[0.10, 0.30]$ & 0.184 & 0.133 \\
Entropy Coef ($c_1$) & $[1 \times 10^{-5}, 1 \times 10^{-2}]$ & $1.51 \times 10^{-3}$  & $4.08\times 10^{-5}$ \\
Value Function Coef ($c_2$) & $[0.30, 1.00]$ & 0.393 & 0.409 \\
\bottomrule
\end{tabular}}
\end{table*}

\subsection{Ablation study}
We conduct an ablation study to evaluate whether our observation features help the agent learn effective strategies.
Specifically, we conduct two experiments in which both the adversary and victim stakes are 20\%, i.e., $\alpha_{\mathcal{A}}=20\%$, $\alpha_{\mathcal{T}}=20\%$.
In the full observation space, the discovered strategy achieves a 8.3\% reduction in the victim's slot allocation.
For the first experiment, we disable the counterfactual strategic oracles and train and evaluate the model using the same process as in \Cref{sec:DRL-effectiveness} (RQ2).
The results show that, in this setting, the discovered strategy can reduce only 0.8\% of the victim's slot allocation, which is significantly lower than 8.3\% in our evaluation.
In the second experiment, we disable the branch identifier and repeat the same process, with the results showing that the discovered strategy reduces 7.5\% of the victim's allocation, suggesting that this feature is also helpful.

We further evaluate the value-aware observation features in the MEV-aware setting by masking one feature at a time during evaluation while keeping the trained policy fixed. The current slot value $V_s^{(\mathrm{cur})}$ accounts for $83.26\%$ of the clipped reward drop, while $V_s^{(\mathrm{priv})}$ and $V_s^{(\mathrm{pub})}$ account for $9.70\%$ and $7.04\%$, respectively, suggesting that the policy mainly relies on current MEV value and also uses branch-level value information.

\subsection{Robustness and generalizability}
After validating the effectiveness of the discovered strategies in \Cref{sec:DRL-effectiveness}, we examine how sensitive their effectiveness is to the choice of learning algorithm and training configuration, rather than assuming that it solely reflects properties of the consensus protocol. To this end, we evaluate whether strategies learned under one algorithmic setting or stake configuration generalize across different learning algorithms and unseen stake distributions.
We repeat the RQ2 training using the A2C algorithm~\cite{mnih2016asynchronous} to assess algorithm independence. Separately, to evaluate zero-shot generalization across stake distributions, we test the fixed policy trained with $\alpha_{\mathcal{A}}=20\%$ and $\alpha_{\mathcal{T}}=20\%$ on other stake configurations without retraining.

\begin{figure}
    \centering
    \includegraphics[width=0.65\linewidth]{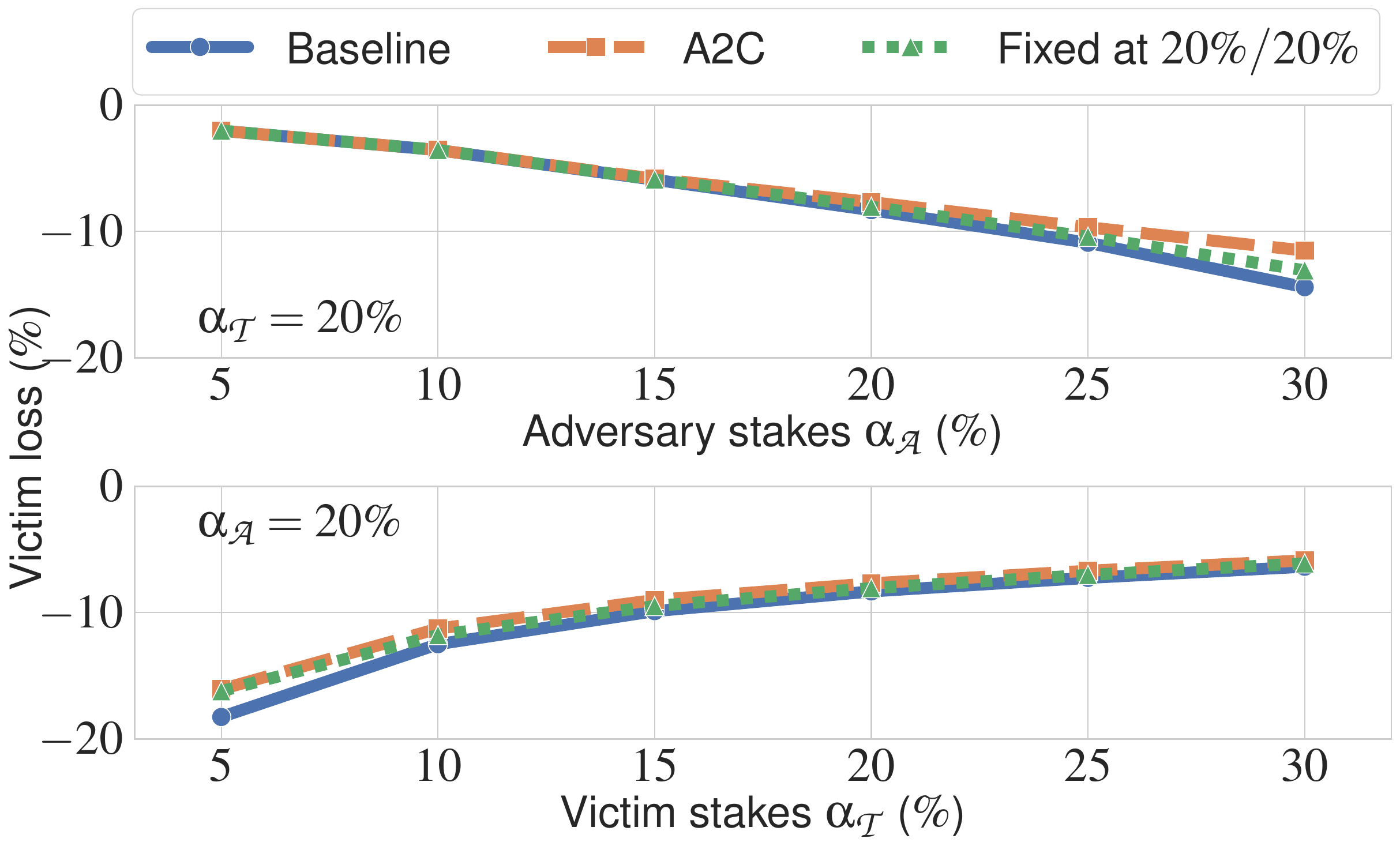}
    \captionDescription{Victim loss under three settings: Baseline (RQ2 training per stake), A2C (the same procedure using A2C), and Fixed (policy trained at $\alpha_{\mathcal{A}}=20\%$, $\alpha_{\mathcal{T}}=20\%$).}
    \label{fig:robustness-generalizability}
\end{figure}

We evaluate the effectiveness of each setting using the same metric (victim loss) as in RQ2. As shown in \Cref{fig:robustness-generalizability}, the three curves closely track each other across all stake configurations.
Even when the Fixed or A2C policies are slightly weaker than the Baseline at some stakes, e.g., $\alpha_{\mathcal{A}}=20\%, \alpha_{\mathcal{T}}=5\%$, their effectiveness in performance degradation remains stronger than the selfish-mixing strategies~\cite{alpturer2024optimal,nagy2025forking}, the estimated victim loss is about $-2\%$ under the same stake configuration.

Together, these results indicate limited dependence on the choice of learning algorithm, as well as meaningful zero-shot generalization across stake distributions.

\subsection{Policy validation under heterogeneous block values}
\label{app:rq3-validation}

In this section, we compare the adversary's gain and the victim's loss in the \gls{MEV}-aware setting with the corresponding results in RQ1 and RQ2 to assess whether our \gls{DRL} framework still learns effective strategies under heterogeneous block values.

\begin{figure}[htbp]
    \centering
    \includegraphics[width=0.5\linewidth]{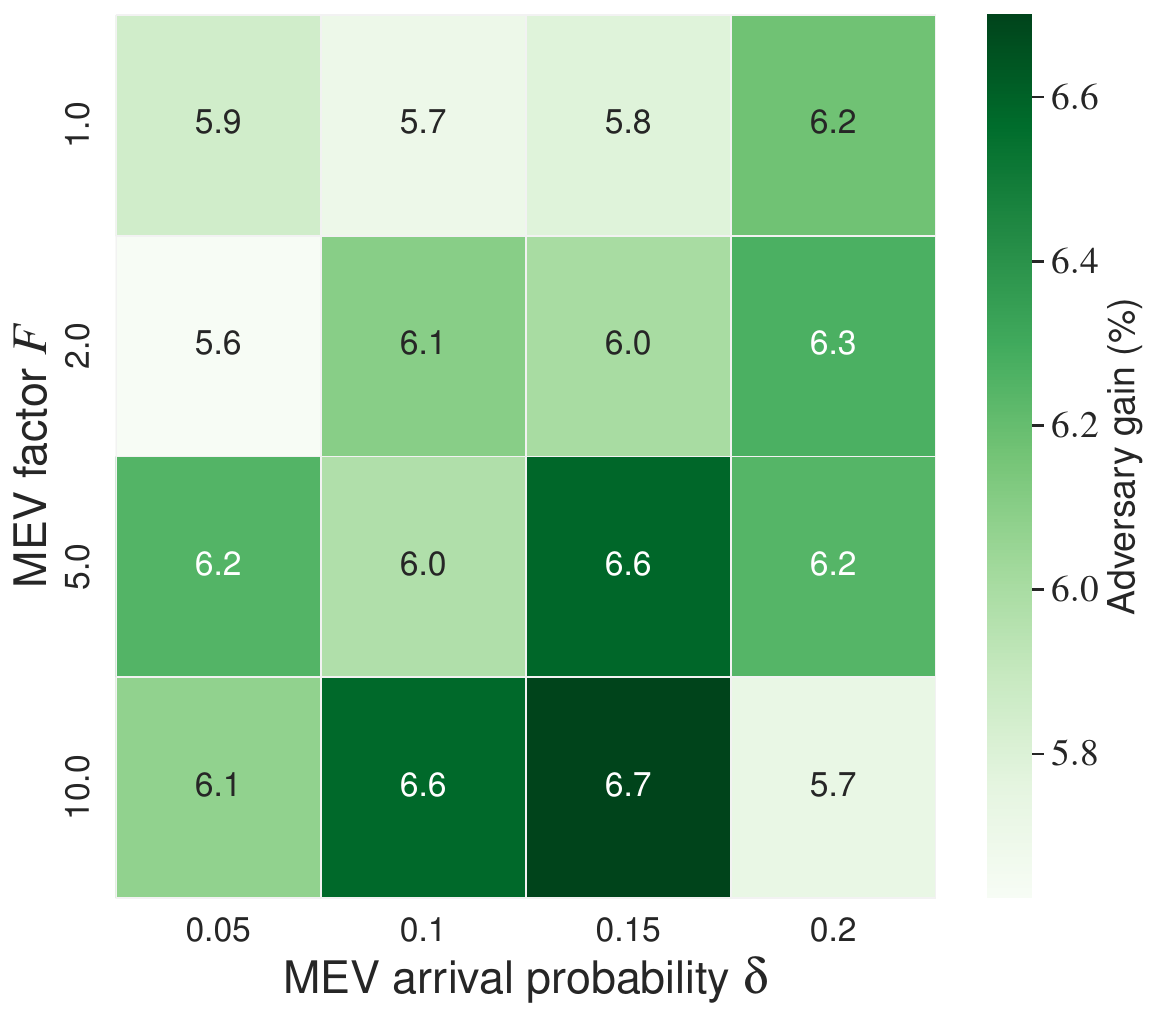}
    \captionDescription{Heatmap of the impact of the learned self-optimization strategy under varying MEV arrival probability $\delta$ and additional value $F$.}
    \label{fig:rq3-rq1-heatmap}
\end{figure}

As shown in~\Cref{fig:rq3-rq1-heatmap}, we measure the impact of the learned self-optimization strategy in terms of the adversary's value gain.
For each pair of \gls{MEV} parameters $(\delta,F)$, we compute the expected value captured by the adversary and compare it with the realized value captured during evaluation.
The results show that the adversary's value gain (about $6\%$ above the expected value) is slightly higher than in the baseline setting, where the adversary increases its slot allocation by $4.8\%$.
This suggests that our \gls{DRL} framework not only learns an effective self-optimization strategy, but also exploits heterogeneous block values by more effectively capturing high-value blocks when \gls{MEV} opportunities are present.

\begin{figure}[htbp]
    \centering
    \includegraphics[width=0.5\linewidth]{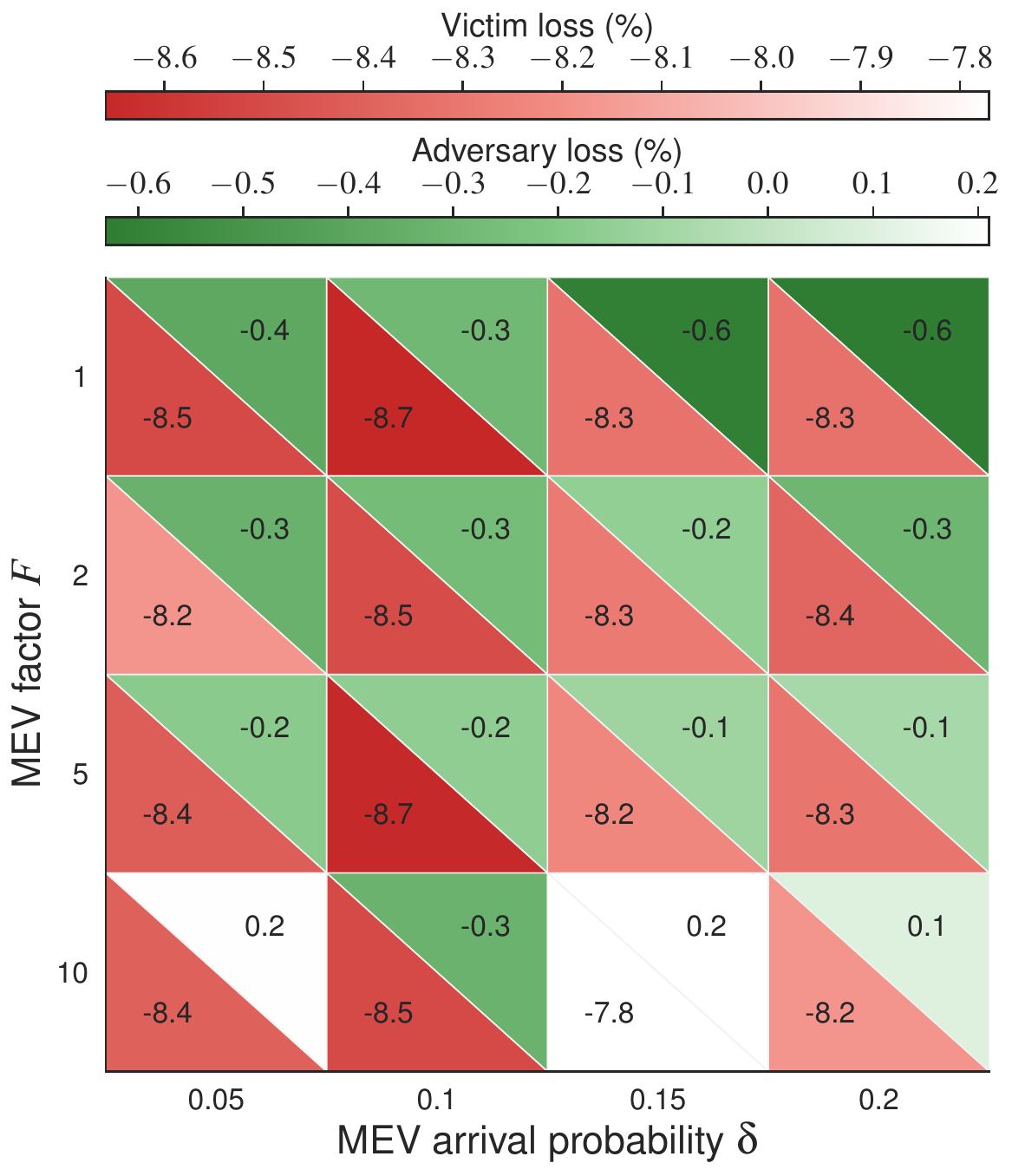}
    \captionDescription{Heatmap of the impact of the learned pure-griefing strategy under varying MEV arrival probability $\delta$ and additional value $F$.}
    \label{fig:rq3-heatmap}
\end{figure}

As shown in~\Cref{fig:rq3-heatmap}, we measure the impact of the learned pure-griefing strategy in terms of value loss.
For each pair of \gls{MEV} parameters $(\delta,F)$, we compute the expected value captured by the adversary and the victim based on the expected per-slot value induced by $\delta$ and $F$.
We then compare these baselines with the realized values captured during evaluation, and report the resulting adversary loss and victim loss.
The results show that the victim's loss (about $\-8\%$) remains close to that in RQ2 across different \gls{MEV} settings, while the adversary's own loss is even lower.
This confirms that our \gls{DRL} framework can still learn effective pure-griefing strategies under heterogeneous block values.

\section{Additional Profit Simulation Details}
\label{app:profit-simulation-details}

\subsection{APR-equivalent degradation calibration}
\label{sec:apr-equivalent-calibration}
The consensus-layer attack directly reduces the target pool's realized proposal allocation, while the application-layer simulation requires an APR-equivalent degradation parameter. We calibrate this mapping using Lido's reward decomposition on 2025-06-09. On that day, block proposal rewards, including both execution-layer rewards and consensus-layer proposer rewards, were about 167 ETH, while total staking revenue was
about 683 ETH. Thus, proposal rewards accounted for approximately
$167/683 \approx 24.5\%$ of total staking revenue. A 16\% reduction in
realized proposer allocation therefore corresponds to
\[
\delta^{\mathrm{APR}} \approx 0.16 \times \frac{167}{683} \approx 4\%,
\]
which we round to 4\% in the main simulation.

\subsection{Alternative market calibrations}
\label{sec:robustness-simulation}

To test whether the main conclusion depends on the main 60-day calibration or favorable execution conditions, we repeat the profit simulation using the 90-day estimates from \Cref{tab:lst-price-performance-90-day} and larger slippage bounds.
These checks test whether profitability is driven by the longer return horizon or by overly optimistic execution assumptions.

Under the 90-day calibration, the sensitivity coefficient $\hat{\beta}_H$ is $0.5472$ for Coinbase, $0.3877$ for Ether.fi, and $0.1296$ for Rocket Pool, with corresponding $\sigma_H$ values of $0.0080$, $0.0054$, and $0.0077$.
Under the 60-day calibration, $\hat{\beta}_H$ changes to $0.3587$, $0.2564$, and $0.1389$, with corresponding $\sigma_H$ values of $0.0073$, $0.0052$, and $0.0076$.

\begin{figure}
    \centering
    \begin{subfigure}[t]{0.9\linewidth}
        \centering
        \includegraphics[width=0.65\linewidth]{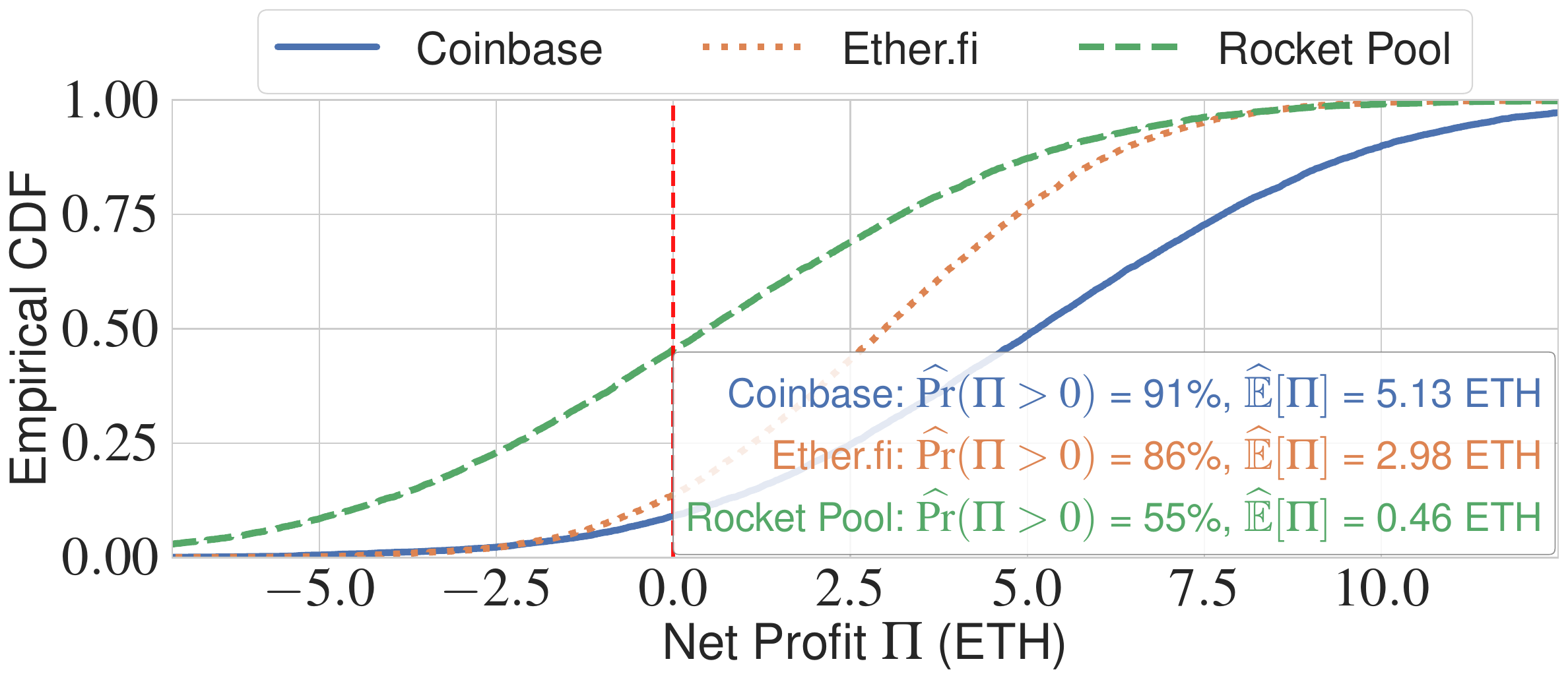}
        \captionDescription{$0.1\%$ slippage}
    \end{subfigure}

    \vspace{0.8em}

    \begin{subfigure}[t]{0.9\linewidth}
        \centering
        \includegraphics[width=0.65\linewidth]{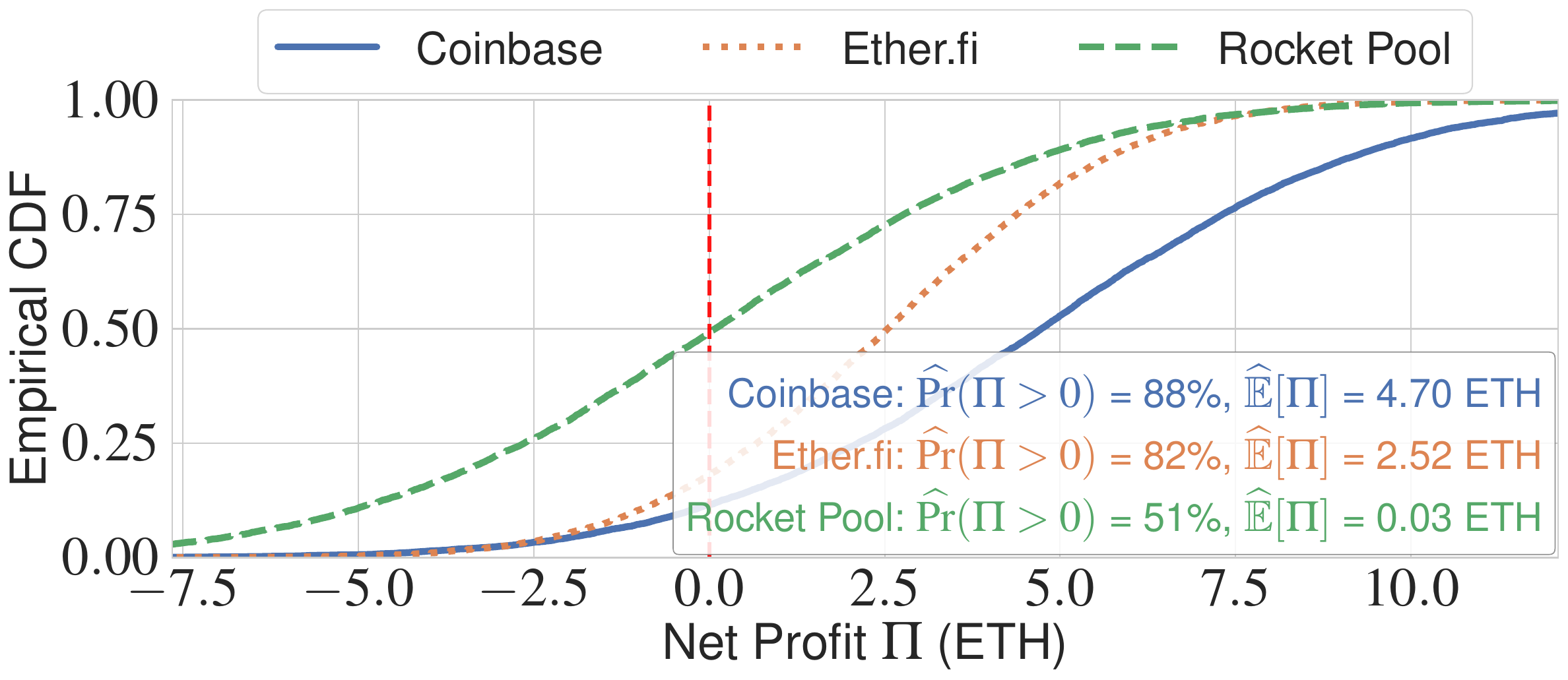}
        \captionDescription{$0.2\%$ slippage}
    \end{subfigure}

    \captionDescription{Profit distributions under higher slippage bounds using the 60-day calibration.}
    \label{fig:robustness-slippage-full-sample}
\end{figure}

\begin{figure}
    \centering
    \begin{subfigure}[t]{0.9\linewidth}
        \centering
        \includegraphics[width=0.8\linewidth]{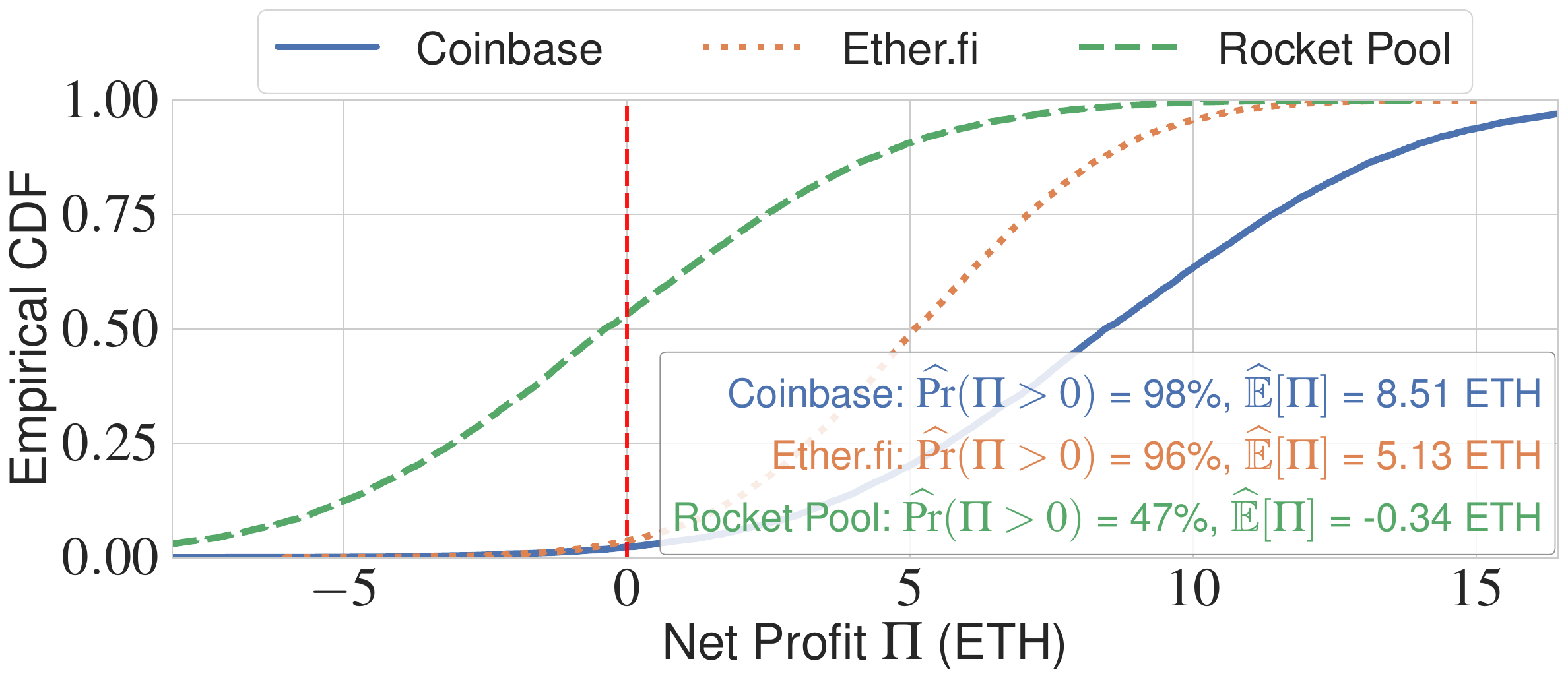}
        \captionDescription{$0.05\%$ slippage}
    \end{subfigure}
     \vspace{0.8em}
    \begin{subfigure}[t]{0.9\linewidth}
        \centering
        \includegraphics[width=0.8\linewidth]{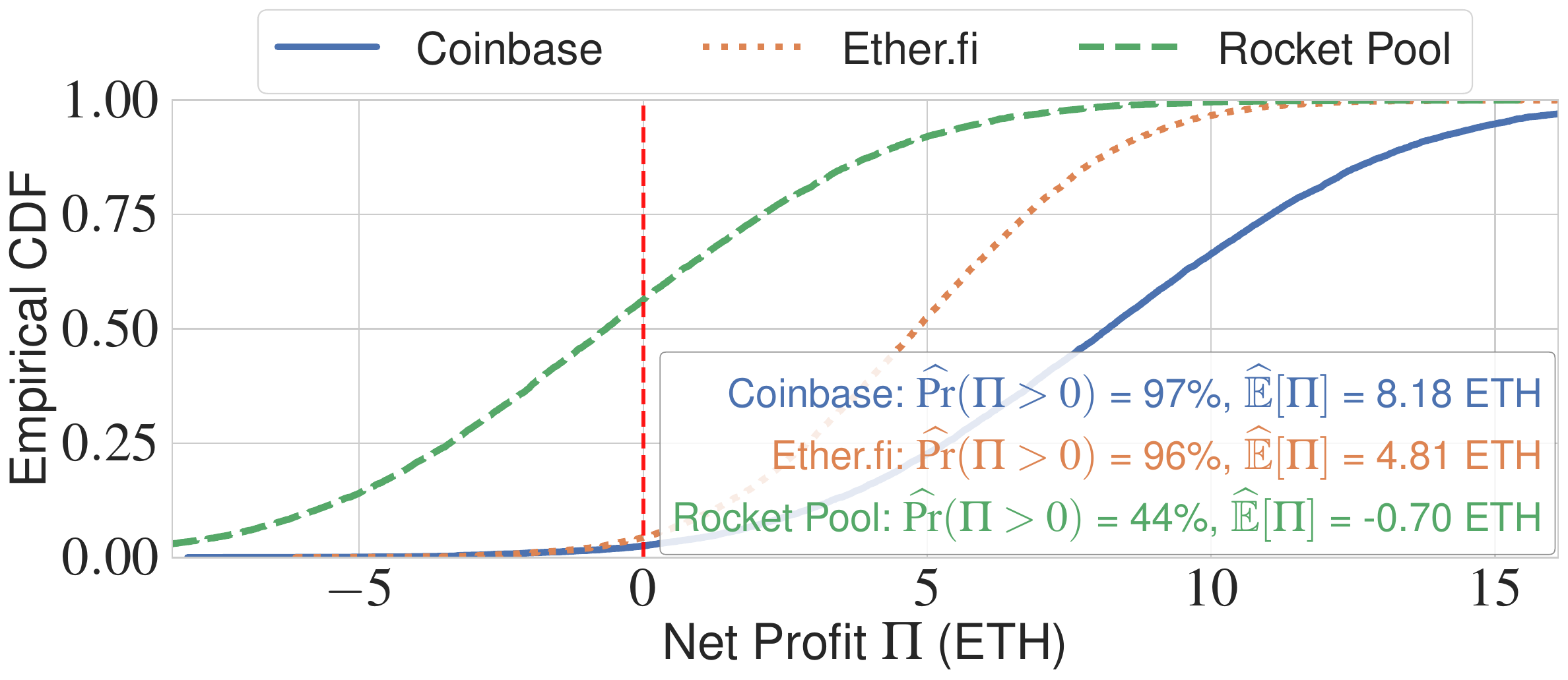}
        \captionDescription{$0.1\%$ slippage}
    \end{subfigure}
     \vspace{0.8em}
    \begin{subfigure}[t]{0.9\linewidth}
        \centering
        \includegraphics[width=0.8\linewidth]{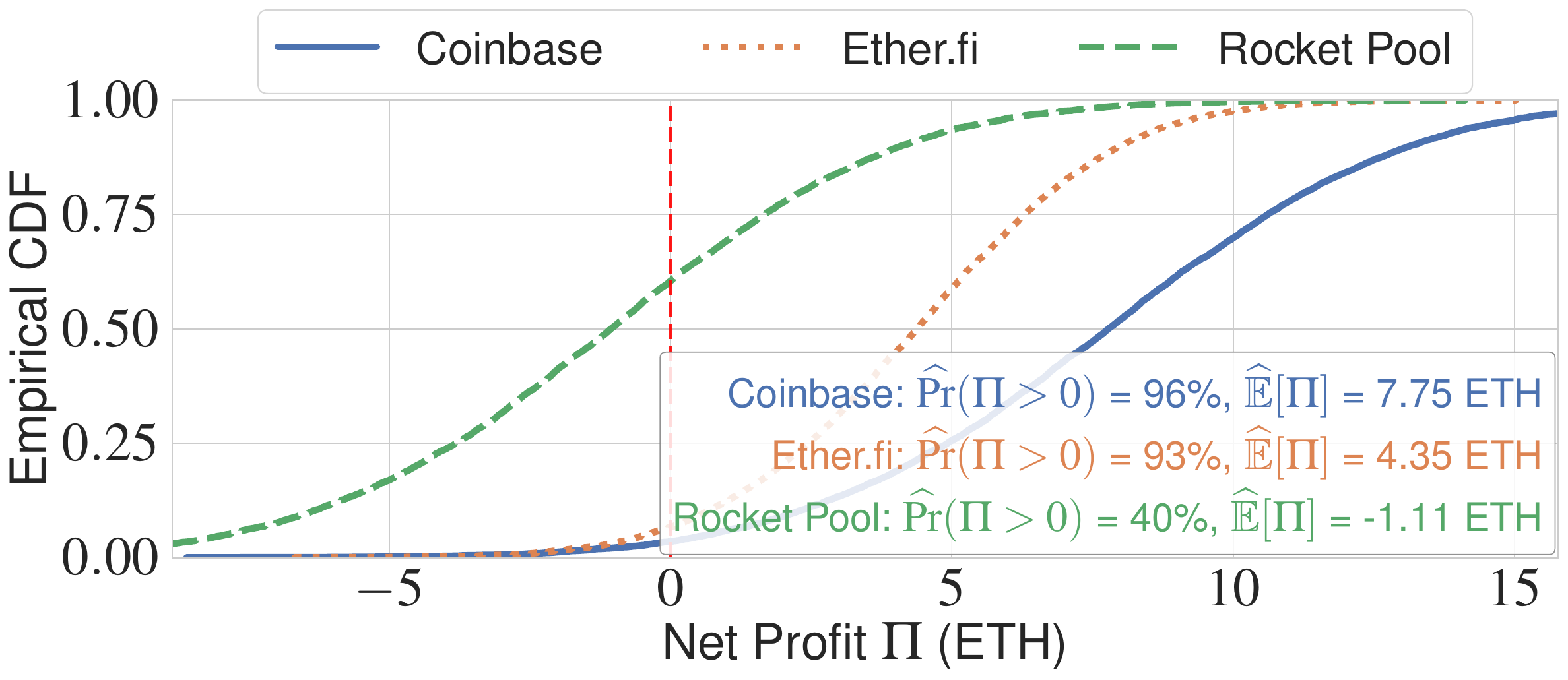}
        \captionDescription{$0.2\%$ slippage}
    \end{subfigure}
    \captionDescription{Profit distributions under the 90-day calibration across slippage bounds.}
    \label{fig:robustness-90-day-window-profit}
\end{figure}

\Cref{fig:robustness-slippage-full-sample} shows that increasing slippage shifts the profit distributions left across all three \glspl{LST}, indicating lower expected profits and lower probabilities of profitable outcomes.
However, the expected profit remains positive under both higher-slippage regimes.
Coinbase and Ether.fi remain clearly profitable, with only moderate reductions in both expected profit and profit probability.
Rocket Pool is more sensitive to execution frictions: its expected profit becomes close to zero under the $0.2\%$ slippage bound, indicating a smaller profit margin.

As shown in \Cref{fig:robustness-90-day-window-profit}, under the 90-day calibration, Coinbase and Ether.fi exhibit higher expected profits than under the 60-day baseline across all slippage regimes.
By contrast, Rocket Pool remains more sensitive to execution frictions, with negative expected profit under all three slippage bounds.
This suggests that a longer calibration horizon mainly strengthens profitability when the empirical performance--price relation is stronger, while pools with smaller margins remain vulnerable to execution frictions.

Overall, these results show that the attack is sensitive to market depth and the strength of the performance--price relation, as expected.
Nevertheless, the qualitative conclusion remains robust: across alternative calibrations and higher slippage bounds, many configurations still yield positive expected profits.
These profits often exceed honest staking rewards for the same capital, creating a strong economic incentive for the adversary.

\section{Proof of \texorpdfstring{\Cref{res:BeaconAttackImpossibility}}{Theorem~\ref{res:BeaconAttackImpossibility}}}
\label{sec:ProofBeaconAttackImpossibility}

We restate and prove \Cref{res:BeaconAttackImpossibility} below.

\resBeaconAttackImpossibility*

\begin{proof}
Assume towards contradiction that such a mechanism exists and consider two cases.

\paragraphNoSkip{There exists some account $i \neq a$ such that $p_i(s') > p_i(s)$}
Consider two executions that are identical from the mechanism's point of view and differ only in the ownership of account $i$.
In one execution, $i$ is controlled by an honest participant.
In the other, $i$ is controlled by the attacker.
By assumption, the mechanism cannot distinguish these two executions, and therefore must assign the same election probabilities in both.
It follows that any reassignment of election probability to $i$ that appears safe in the first execution also benefits the attacker in the second, contradicting property (i).

\paragraphNoSkip{There does not exist an account $i \neq a$ such that $p_i(s') > p_i(s)$}
As the mechanism is assumed to not harm any account but $a$, then we can rule out the possibility that $\exists i \neq a$ with $p_i(s') < p_i(s)$.
If so, it must be that $\forall i \ne a: p_i(s') = p_i(s)$, so we obtain:
\begin{align*}
    \sum_{i\in\left[n\right]} p_i(s')
    &
    =
    p_a(s') + \sum_{i \ne a} p_i(s')
    \\&
    =
    p_a(s) - \Delta + p_i(s)
    \\&
    =
    \left( \sum_{i\in\left[n\right]} p_i(s) \right) - \Delta
\end{align*}
By the assumption that $\Delta > 0$:
\begin{align*}
    \sum_{i\in\left[n\right]} p_i(s')
    =
    \left( \sum_{i\in\left[n\right]} p_i(s) \right) - \Delta
    <
    \sum_{i\in\left[n\right]} p_i(s)
\end{align*}
Moreover, due to property (ii):
\begin{align*}
    1
    =
    \sum_{i\in\left[n\right]} p_i(s')
    <
    \sum_{i\in\left[n\right]} p_i(s)
    =
    1
\end{align*}
A contradiction is reached, thus completing the proof.
\end{proof}

\end{document}